\newcommand{\ignore}[1]{}
\newtheorem{prob-statement}{Problem}
\newtheorem{thrm}{Theorem}
\DeclareMathOperator*{\minimize}{minimize}
\DeclareMathOperator*{\var}{Var}
\newenvironment{bluetext}{\par\color{black}}{\par}
\begin{document}

\title{Distributed Inference with M-ary Quantized Data in the Presence of Byzantine Attacks}

\author{V.~Sriram~Siddhardh~(Sid)~Nadendla,~\IEEEmembership{Student Member,~IEEE,}
        Yunghsiang~S.~Han,~\IEEEmembership{Fellow,~IEEE,}
        and~Pramod~K.~Varshney,~\IEEEmembership{Fellow,~IEEE}

\thanks{V. Sriram Siddhardh (Sid) Nadendla and Pramod K. Varshney are with the Department
of Electrical Engineering and Computer Science, Syracuse University, Syracuse, NY 13201, USA. E-mail: \{vnadendl, varshney\}@syr.edu.}

\thanks{Yunghsiang S. Han is with the Department of Electrical Engineering, National Taiwan University of Science and Technology, Taipei, Taiwan. E-mail: yshan@mail.ntust.edu.tw‎.}

}


\maketitle


\begin{abstract}
The problem of distributed inference with M-ary quantized data at the sensors is investigated in the presence of Byzantine attacks. We assume that the attacker does not have knowledge about either the true state of the phenomenon of interest, or the quantization thresholds used at the sensors. Therefore, the Byzantine nodes attack the inference network by modifying modifying the symbol corresponding to the quantized data to one of the other M symbols in the quantization alphabet-set and transmitting the false symbol to the fusion center (FC). In this paper, we find the optimal Byzantine attack that \emph{blinds} any distributed inference network. As the quantization alphabet size increases, a tremendous improvement in the security performance of the distributed inference network is observed. 

We also investigate the problem of distributed inference in the presence of resource-constrained Byzantine attacks. In particular, we focus our attention on two problems: distributed detection and distributed estimation, when the Byzantine attacker employs a highly-symmetric attack. For both the problems, we find the optimal attack strategies employed by the attacker to maximally degrade the performance of the inference network. A reputation-based scheme for identifying malicious nodes is also presented as the network's strategy to mitigate the impact of Byzantine threats on the inference performance of the distributed sensor network.
\end{abstract}


\begin{IEEEkeywords}
Distributed Inference, Network-Security, Sensor Networks, Byzantine Attacks, Kullback-Leibler Divergence, Fisher Information.
\end{IEEEkeywords}


\IEEEpeerreviewmaketitle

\section{Introduction \label{sec: Introduction}}

Distributed inference in sensor networks has been widely studied by several scholars in the past three decades (See \cite{Book-Varshney, Akyildiz2002, Veeravalli2012, Chapter-Tsitsiklis1993, Viswanathan1997, Blum1997, Chen2006, Chamberland2007, Cheng2012, Patwari2005, Brooks2003} and references therein). The distributed inference framework comprises of a group of spatially distributed sensors which acquire observations about a phenomenon of interest (POI) and send processed data to a fusion center (FC) where a global inference is made. Due to resource-constraints in sensor networks, this data is processed at the sensors in such a way that the observations are mapped to symbols from an alphabet set of size M, prior to transmission to the FC.  When $M = 2$, we employ binary quantization to generate processed data. When $M > 2$, we send an M-ary symbol that is assumed to be generated via fine quantization. A sensor decision rule is assumed to be characterized by a set of quantization thresholds. In this paper, we use the phrases `\emph{mapped to one of the M-ary symbols}' and `\emph{quantized to an M-ary symbol}' interchangeably. A lot of work in the past has focussed on the binary quantization case, i.e., $M = 2$. In this paper, we consider the case of more general $M$, $M = 2$ being a special case. The framework of distributed inference networks has been extensively studied for different types of inference problems such as detection (e.g., \cite{Book-Varshney, Tsitsiklis1988, Viswanathan1997, Blum1997, Chen2006, Chamberland2007, Veeravalli2012}), estimation (e.g., \cite{Cheng2012, Patwari2005, Veeravalli2012}), and tracking (e.g., \cite{Brooks2003, Veeravalli2012}) in the presence of both ideal and non-ideal channels. In this paper, we focus our attention on two distributed inference problems, namely \emph{detection} and \emph{estimation} in the framework of distributed inference\ignore{ with ideal channels}, where sensors quantize their data to M-ary symbols. 

Although the area of sensor networks has been a very active field of research in the past, security problems in sensor networks have gained attention only \textcolor{black}{in the last decade \cite{Perrig2002, Perrig2004, Karlof2004}}. As the security threats have evolved more specifically directed towards inference networks, attempts have been made at the system-level to either prevent or mitigate these threats from deteriorating the network performance. While there are many types of security threats, in this paper, we address the problem of one such attack, called the Byzantine attack, in the context of distributed inference networks (see a recent survey \cite{Vempaty2013b} by Vempaty \emph{et al.}). \textcolor{black}{Byzantine attacks (proposed by Lamport \emph{et al}. in \cite{Lamport1982}) in general, are arbitrary and may refer to many types of malicious behavior. In this paper, we focus only on the data-falsification aspect of the Byzantine attack wherein one or more compromised nodes of the network send false information to the FC in order to deteriorate the inference performance of the network. A well known example of this attack is the \emph{man-in-the-middle} attack \cite{Nayak2010} where, on one hand, the attacker collects data from the sensors whose authentication process is compromised by the attacker emulating as the FC, while, on the other hand, the attacker sends false information to the FC using the compromised sensors' identity. In summary, if the $i^{th}$ sensor's authentication is compromised, the attacker remains invisible to the network, accepts the true decision $u_i$ from the $i^{th}$ sensor and sends $v_i$ to the FC in order to deteriorate the inference performance.}

Marano \emph{et al.}, in \cite{Marano2009}, analyzed the Byzantine attack on a network of sensors carrying out the task of distributed detection, where the attacker is assumed to have complete knowledge about the hypotheses. This represents the extreme case of Byzantine nodes having an extra power of knowing the true hypothesis. In their model, they assumed that the sensors quantized their respective observations into M-ary symbols, which are later fused at the FC. The Byzantine nodes pick symbols using an optimal probability distribution that are conditioned on the true hypotheses, and transmit them to the FC in order to maximally degrade the detection performance. Rawat \emph{et al.}, in \cite{Rawat2011}, also considered the problem of distributed detection in the presence of Byzantine attacks with binary quantizers at the sensors in their analysis. Unlike the authors in \cite{Marano2009}, Rawat \emph{et al.} did not assume complete knowledge of the true hypotheses at the Byzantine attacker. Instead, they assumed that the Byzantine nodes derive the knowledge about the true hypotheses from their own sensing observations. In other words, a Byzantine node potentially flips the local decision made at the node. It does not modify the thresholds at the sensor quantizers. Rawat \emph{et al.} also analyzed the performance of the network in the presence of independent and collaborative Byzantine attacks and modeled the problem as a zero-sum game between the sensor network and the Byzantine attacker. In addition to the analysis of distributed detection in the presence of Byzantine attacks, a reputation-based scheme was proposed by Rawat \emph{et al.} in \cite{Rawat2011} for identifying the Byzantine nodes by accumulating the deviations between each sensor's decision and the FC's decision over a time window of duration $T$. If the accumulated number of deviations is greater than a prescribed threshold for a given node, then the FC tags it as a Byzantine node. In order to mitigate the attack, the FC removes nodes which are tagged Byzantine node from the fusion rule. Another mitigation scheme was proposed by Vempaty \emph{et al.} \cite{Vempaty2011}, where each sensor's behavior is learnt over time and compared to the known behavior of the honest nodes. Any significant deviation in the learnt behavior from the expected honest behavior is labelled Byzantine node. Having learnt their parameters, the authors also proposed the use of this information to adapt their fusion rule so as to maximize the performance of the FC. In contrast to the parallel topology in sensor networks, Kailkhura \emph{et al.} in \cite{Kailkhura2013} investigated the problem of Byzantine attacks on distributed detection in a hierarchical sensor network. They presented the optimal Byzantine strategy when the sensors communicate their decisions to the FC in multiple hops of a balanced tree. They assumed that the cost of compromising sensors at different levels of the tree varies, and found the optimal Byzantine strategy that minimizes the cost of attacking a given hierarchical network.

Soltanmohammadi \emph{et al.} in \cite{Soltanmohammadi2013} investigated the problem of distributed detection in the presence of different types of Byzantine nodes. Each Byzantine node type corresponds to a different operating point, and, therefore, the authors considered the problem of identifying different Byzantine nodes, along with their operating points. The problem of maximum-likelihood (ML) estimation of the operating points was formulated and solved using the expectation-maximization (EM) algorithm. Once the Byzantine node operating points are estimated, this information was utilized at the FC to mitigate the malicious activity in the network, and also to improve global detection performance.

Distributed target localization in the presence of Byzantine attacks was addressed by Vempaty \emph{et al.} in \cite{Vempaty2013}, where the sensors quantize their observations into binary decisions, which are transmitted to the FC. Similar to Rawat \emph{et al.}'s approach in \cite{Rawat2011}, the authors in \cite{Vempaty2013} investigated the problem of distributed target localization from both the network's and Byzantine attacker's perspectives, first by identifying the optimal Byzantine attack and second, mitigating the impact of the attack with the use of non-identical quantizers at the sensors. 

In this paper, we extend the framework of Byzantine attacks when Byzantine nodes do not have complete knowledge about the true state of the phenomenon-of-interest (POI), and when the sensors generate M-ary symbols instead of binary symbols. We also assume that the Byzantine attacker is ignorant about the quantization thresholds used at the sensors to generate the M-ary symbols.\footnote{The well-known attacker-in-the-middle is one such example.} Under these assumptions, we address two inference problems: binary hypotheses-testing and parameter estimation. 

The main contributions of the paper are three-fold. First, we define a Byzantine attack model for a sensor network with individual sensors quantizing their observations into one of the M-ary symbols, when the attacker does not have complete knowledge about the true state of the POI and thresholds employed by the sensors. We model the attack strategy as a flipping probability matrix, where $(i,j)^{th}$ entry represents the probability with which the $i^{th}$ symbol is flipped into the $j^{th}$ symbol. Second, we show that quantization into M-ary symbols at the sensors, as opposed to binary quantization, improves both inference as well as security performance simultaneously. As a function of the number of Byzantine nodes in the network, we derive the optimal flipping matrix. Finally, we extend the mitigation scheme presented by Rawat \emph{et al.} in \cite{Rawat2011} to the more general case where sensors generate M-ary symbols. We present simulation results to illustrate the performance of the reputation-based scheme for the identification of Byzantine nodes in the network. 

The remainder of the paper is organized as follows. In Section \ref{sec: Model}, we describe our system model and present the Byzantine attack model for the case where sensors generate M-ary symbols when the attacker has no knowledge about the true state of the phenomenon of interest and quantization thresholds employed by the sensors. Next, we determine the most powerful attack strategy that the Byzantine nodes can adopt in Section \ref{sec: Analysis}. In the case of resource-constrained Byzantine attacks, where the attacker cannot compromise enough number of nodes in the network to \emph{blind} it (to be defined in Section \ref{sec: Model}), we find the optimal Byzantine attack for a fixed fraction of Byzantine nodes in the network in the context of distributed detection and estimation in Sections \ref{sec: Detection} and \ref{sec: Estimation} respectively. From the network's perspective, we present a mitigation scheme in Section \ref{sec: Reputation - Byzantine Identification} that identifies the Byzantine nodes using reputation-tags. Finally, we present our concluding remarks in Section \ref{sec: Conclusion}.

\section{System Model \label{sec: Model}}

\begin{figure*}[!t]
	\centerline
    {
    	\subfloat[Sensor Network Model]{\includegraphics[width=3.3in]{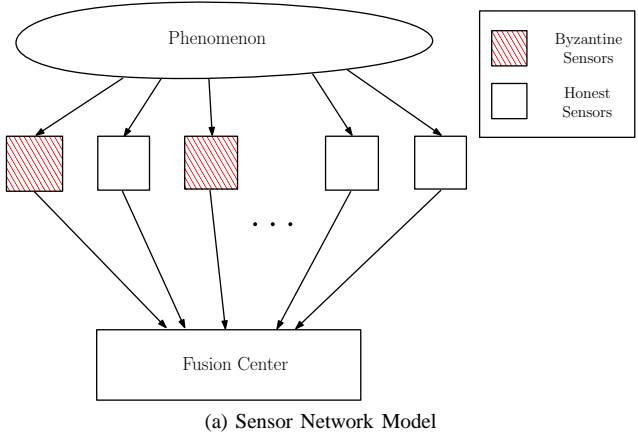}%
        \label{Fig: model}}
        \hfil
        \subfloat[Byzantine Attack Model]{\includegraphics[width=3.3in]{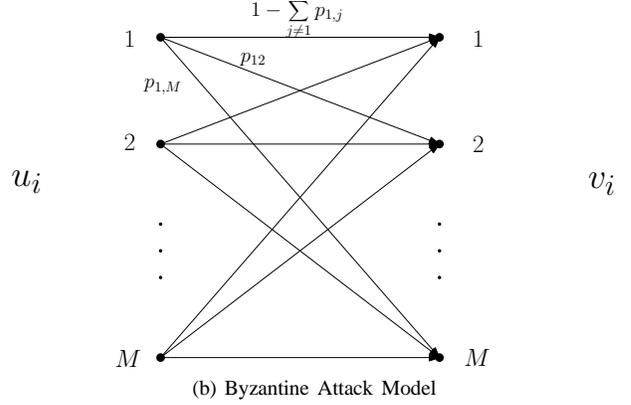}%
        \label{Fig: byzantine}}
    }
    \caption{Distributed Inference Network in the Presence of Byzantine Attacks}
    \label{Fig: Network Model}
\end{figure*}

Consider an inference (sensor) network with N sensors, where $\alpha$ fraction of the nodes in the network are assumed to be compromised (Refer to Figure \ref{Fig: model}). These compromised sensors transmit false data to the fusion center (FC) in order to deteriorate the inference performance of the network. We assume that the network is designed to infer about a particular phenomenon, regarding which sensors acquire \emph{conditionally-independent} observations. We denote the observation of the $i^{th}$ sensor as $r_i$. This observation $r_i$ is mapped to one of the $M$ symbols, $u_i \in \{ 1, \cdots, M \}$. In a compromised inference network, since the Byzantine sensors do not transmit their true quantized data, we denote the transmitted symbol as $v_i$ at the $i^{th}$ sensor. If the node $i$ is honest, then $v_i = u_i$. Otherwise, we assume that the Byzantine sensor modifies $u_i = l$ to $v_i = m$ with a probability $p_{lm}$, as shown in Figure \ref{Fig: byzantine}. For the sake of compactness, we denote the transition probabilities depicted in the graph in Figure \ref{Fig: byzantine} using a row-stochastic matrix $\mathbb{P}$, as follows:
\begin{equation}
	\mathbb{P} = 
	\left[ 
	\begin{matrix}
		p_{11} &  p_{12} & \ldots & p_{1M}\\
		p_{21} &  p_{22} & \ldots & p_{2M}\\
		\vdots & \vdots  & \ddots & \vdots\\
		p_{M1} &  p_{M2} & \ldots & p_{MM}
	\end{matrix}
	\right].
\end{equation}

Since the attacker has no knowledge of quantization thresholds employed at each sensor, we assume that $\mathbb{P}$ is independent of the sensor observations. The messages $\mathbf{v} = \{ v_1, v_2, \cdots, v_N \}$ are transmitted to the fusion center (FC) where a global inference is made about the phenomenon of interest based on $\mathbf{v}$. 

In order to consider the general inference problem, we assume that $\theta \in \Theta$ is the parameter that denotes the phenomenon of interest in the received signal $r_i$ at the $i^{th}$ sensor. If we are considering a detection/classification problem, $\theta$ is discrete (finite or countably infinite). In the case of parameter estimation, $\Theta$ is a continuous set. Without any loss of generality, we assume $\Theta = \{ 0, 1, \cdots, K-1 \}$ if the problem of interest is classification. Hence, detection is a special case of classification with $K = 2$. In the case of estimation, we assume that $\Theta = \mathbb{R}$.

Note that the performance of the FC is determined by the probability distribution (mass function) $P(\mathbf{v}|\theta)$. Therefore, in Section \ref{sec: Analysis}, we analyze the behavior of $P(\mathbf{v}|\theta)$ in the presence of different attacks and identify the one with the greatest impact on the network.
        
\section{Optimal Byzantine Attacks \label{sec: Analysis}}

Given the conditional distribution of $r_i$, $p(r_i | \theta)$, and the sensor quantization thresholds, $\lambda_j$ for $0\le j\le M$, the conditional distribution of $u_i$ can be found as
\begin{equation}
	P(u_i = m|\theta) = \displaystyle \int_{\lambda_{m-1}}^{\lambda_m} p(r_i|\theta) dr_i
\end{equation}
for all $m = 1, 2, \cdots, M$.

If the true quantized symbol at the $i^{th}$ node is $u_i = m$, a compromised node will modify it into $v_i = l$ as depicted in Figure \ref{Fig: byzantine}, and transmit it to the FC. Since the FC is not aware of the type of the node (honest or Byzantine), it is natural to assume that node $i$ is compromised with probability $\alpha$, where $\alpha$ is the fraction of nodes in the network that are compromised. Therefore, we find the conditional distribution of $v_i$ at the FC as follows.

\begin{equation}
	\begin{array}{l}
		P(v_i = m|\theta) \ = \ \alpha P(v_i = m|i = Byzantine, \theta) 
		\\ \\ \qquad \qquad \qquad + (1 - \alpha) P(v_i = m|i = Honest, \theta)
		\\
		\\
		\qquad = \displaystyle \alpha \sum_{l = 1}^{M} P(u_i = l|\theta) \cdot P(v_i = m|u_i = l, \theta) 
		\\ \\ \qquad \qquad \qquad + (1 - \alpha) P(u_i = m|\theta)
		\\
		\\
		\qquad = \displaystyle \alpha \sum_{l = 1}^{M} p_{lm} P(u_i = l|\theta) + (1 - \alpha) P(u_i = m|\theta)
		\\
		\\
		\qquad = \displaystyle \alpha \sum_{l \neq m} p_{lm} P(u_i = l|\theta) + [(1 - \alpha) + \alpha p_{mm}] P(u_i = m|\theta)
		\\
		\\
		\qquad = \displaystyle [(1 - \alpha) + \alpha p_{mm}]
		\\ \\ \qquad \qquad \qquad + \displaystyle \sum_{l \neq m}\{\alpha p_{lm} -[(1 - \alpha) + \alpha p_{mm}]\} P(u_i = l|\theta).
	\end{array}
	\label{Eqn: Cond. Prob vi given theta}
\end{equation}
The goal of a Byzantine attack is to blind the FC with the least amount of effort (minimum $\alpha$). To totally blind the FC is equivalent to making $P(v_i = m|\theta)=1/M$ for all $0\le m\le M-1$. In Equation (\ref{Eqn: Cond. Prob vi given theta}), the RHS consists of two terms. The first one is based on prior knowledge and the second term conveys information based on the observations. In order to blind the FC, the attacker should make the second term equal to zero. Since the attacker does not have any knowledge regarding $P(u_i = l|\theta)$, it can make the second term of Equation (\ref{Eqn: Cond. Prob vi given theta}) equal to zero by setting
\begin{equation}
	\alpha p_{lm} = (1 - \alpha) + \alpha p_{mm}, \qquad \forall\ l \neq m.
	\label{Eqn: Blinding condition}
\end{equation} 
Then the conditional probability $P(v_i=m|\theta)=(1-\alpha)+\alpha p_{mm}$ becomes independent of the observations $r_i$ (or its quantized version $u_i$), resulting in equiprobable symbols at the FC. In other words, the received vector $\mathbf{v} = \{ v_1, v_2, \cdots, v_N \}$ does not carry any information about $\theta$ and, therefore, results in the most degraded performance at the FC. So, the FC now has to solely depend on its prior information about $\theta$ in making an inference.

Having identified the condition in Equation (\ref{Eqn: Blinding condition}) under which the Byzantine attack makes the greatest impact on the performance of the network, we identify the strategy that the attacker should employ in order to achieve this condition as follows.  Since we need
$$P(v_i=m|\theta)=(1-\alpha)+\alpha p_{mm}=1/M,$$
 $\alpha=\frac{M-1}{(1-p_{mm})M}$.
To minimize $\alpha$, one needs to make $p_{mm}=0$. 
In this paper, we denote the $\alpha$ corresponding to this optimal strategy that minimizes the Byzantine attacker's resources required to blind the FC as $\alpha_{blind}$. Hence,
$$\alpha_{blind}=\frac{M-1}{M}.$$
Rearranging Equation (\ref{Eqn: Blinding condition}), we have
\begin{equation}
	\displaystyle \frac{1}{\alpha} = 1 + (p_{lm} - p_{mm})=1+p_{lm} \qquad \forall\ l \neq m.
	\label{Eqn: Blinding condition 2}
\end{equation}
By setting $\alpha$ to $\alpha_{blind}$, we have
$p_{lm}=1/(M-1)$, $\forall\ l \neq m$. That is, the transition probability $\mathbb{P}$ is a highly-symmetric matrix. We summarize the result as a theorem as follows.
\begin{thrm}
\label{thrm: Optimal Attack}
If the Byzantine attacker has no knowledge of the quantization thresholds employed at each sensor, then the optimal Byzantine attack is given as 
\begin{equation}
	\begin{array}{lcl}
		p_{lm} & = & \begin{cases} \displaystyle \frac{1}{M-1} & ; \text{ if } l \neq m \\ 0 &; \text{ otherwise} \end{cases}
		\\
		\\
		\alpha_{blind} & = & \displaystyle \frac{M-1}{M}.
		\\
	\end{array}
	\label{Eqn: Optimal Attack}
\end{equation}
\end{thrm}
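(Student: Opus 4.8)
The plan is to extract the theorem directly from the decomposition of $P(v_i = m \mid \theta)$ given in Equation (\ref{Eqn: Cond. Prob vi given theta}), recast as a worst-case (distribution-free) design problem for the attacker. First I would separate the right-hand side into a term $(1-\alpha)+\alpha p_{mm}$ that depends only on the attack parameters and a data-dependent term $\sum_{l\neq m}\{\alpha p_{lm}-[(1-\alpha)+\alpha p_{mm}]\}\,P(u_i=l\mid\theta)$ that couples to the unknown conditional masses. Since blinding the FC means enforcing $P(v_i=m\mid\theta)=1/M$ for every $m$ and every $\theta$, the objective is to eliminate all $\theta$-dependence while minimizing the Byzantine fraction $\alpha$.

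The step I expect to require the most care is justifying that the attacker must zero out each coefficient of the data-dependent term individually, rather than exploiting cancellations among the $P(u_i=l\mid\theta)$. This is where the hypothesis that the attacker does not know the quantization thresholds does the real work: because $\mathbb{P}$ is constrained to be independent of the sensor statistics, the bracketed coefficients are fixed before $P(u_i=l\mid\theta)$ is revealed, so the only way to guarantee the data-dependent term vanishes uniformly over all admissible conditional distributions is to force $\alpha p_{lm}-[(1-\alpha)+\alpha p_{mm}]=0$ for every $l\neq m$. This is precisely the blinding condition of Equation (\ref{Eqn: Blinding condition}), under which $P(v_i=m\mid\theta)=(1-\alpha)+\alpha p_{mm}$ is independent of $\theta$.

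With the blinding condition in hand, the remaining steps are a short optimization. Imposing uniformity $(1-\alpha)+\alpha p_{mm}=1/M$ simultaneously for all $m$ forces the diagonal entries to share a common value $p_{mm}=p_0$ and yields $\alpha=\frac{M-1}{(1-p_0)M}$, which is strictly increasing in $p_0$ on $[0,1)$; hence the resource-minimizing choice is $p_0=0$, giving $\alpha_{blind}=\frac{M-1}{M}$. Substituting $p_0=0$ and $\alpha=\alpha_{blind}$ into the rearranged condition of Equation (\ref{Eqn: Blinding condition 2}) then pins down $p_{lm}=\frac{1}{M-1}$ for all $l\neq m$, which is exactly the highly-symmetric matrix asserted in Equation (\ref{Eqn: Optimal Attack}).

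I would finish with a feasibility check confirming that this candidate is a legitimate attack strategy: every entry lies in $[0,1]$, each row sums to $(M-1)\cdot\frac{1}{M-1}+0=1$ so $\mathbb{P}$ is row-stochastic, and $\alpha_{blind}=\frac{M-1}{M}\in(0,1)$ so the fraction of compromised nodes is admissible. This verification also shows that the symmetric matrix is the unique minimizer consistent with the blinding condition, completing the proof.
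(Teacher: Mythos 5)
Your proposal is correct and follows essentially the same route as the paper: the same decomposition of $P(v_i=m\mid\theta)$ into a prior term and a data-dependent term, the same blinding condition $\alpha p_{lm}=(1-\alpha)+\alpha p_{mm}$, and the same minimization of $\alpha=\frac{M-1}{(1-p_{mm})M}$ via $p_{mm}=0$. Your added justification that the coefficients must vanish individually (since $\mathbb{P}$ is fixed before the unknown $P(u_i=l\mid\theta)$ is revealed) and the closing row-stochasticity check are slight refinements of the paper's argument rather than a different approach.
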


We term Equation (\ref{Eqn: Optimal Attack}) as the optimal Byzantine attack, since the FC does not get any information from the data $\mathbf{v}$ it receives from the sensors to perform an inference task. Therefore, the FC has to rely on prior information about the parameter $\theta$, if available. 
\begin{bluetext}Theorem~\ref{thrm: Optimal Attack} can be extended to the case where the channels between sensors (attackers) are not perfect. The result is given in Appendix~\ref{sec:non-perfect channel}.\end{bluetext}

\begin{figure*}[!t]
	\centerline
    {
    	\subfloat[$\alpha_{blind}$ vs. $M$]{\includegraphics[width=4in]{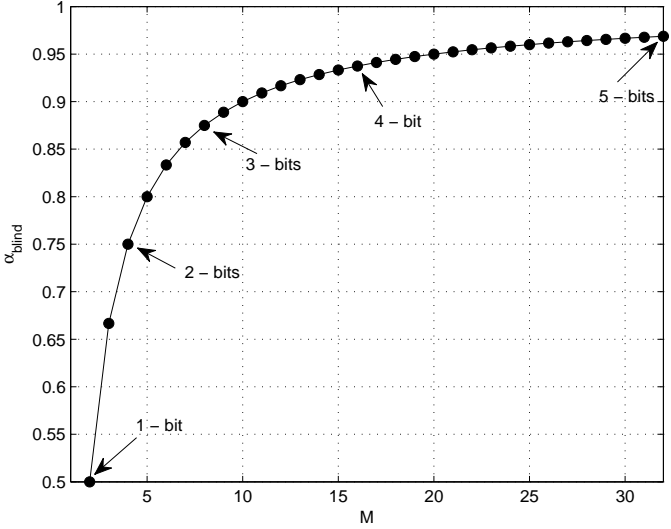}%
        \label{Fig: alpha_vs_M}}
        \hfil
        \subfloat[$\alpha_{blind}$ vs. Number of quantization bits ($\log_2 M$)]{\includegraphics[width=4in]{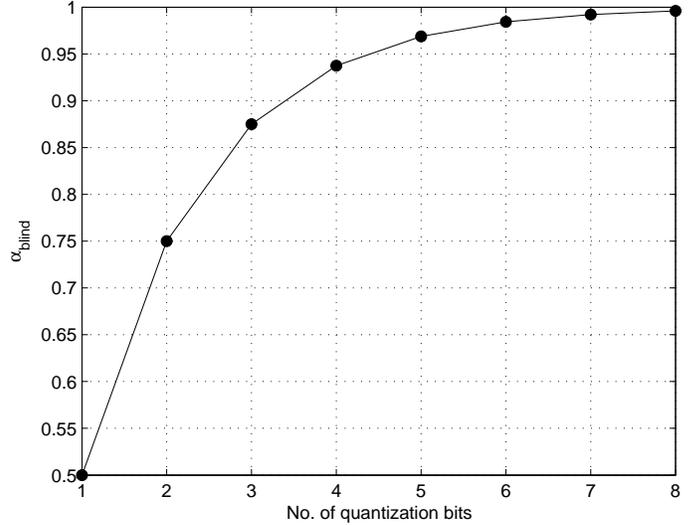}%
        \label{Fig: alpha_vs_bits}}
    }
    \caption{Improvement in $\alpha_{blind}$ with increasing number of quantization levels}
    \label{Fig: alpha_vs_M_bits}
\end{figure*}


\begin{table}[!t]
	\begin{center}
		\begin{tabular}{|c|c|}
			\hline
			Quantization bits & $\alpha_{blind}$
			\\
			\hline \hline
			1 & 0.5 
			\\
			\hline
			2 & 0.75
			\\
			\hline
			3 & 0.875
			\\
			\hline
			4 & 0.9375
			\\
			\hline
			5 & 0.9688
			\\
			\hline
			6 & 0.9844
			\\
			\hline
			7 & 0.9922
			\\
			\hline
			8 & 0.9961
			\\
			\hline
		\end{tabular}
		\caption{Improvement in $\alpha_{blind}$ with increasing number of quantization bits, $\log_2 M$}
		\label{Table: alpha_blind_bits}
	\end{center}
\end{table}

In Figure \ref{Fig: alpha_vs_M_bits}, we show how $\alpha_{blind}$ scales with increasing quantization alphabet size, $M$. Since the quantized symbols are encoded into bits, we also show an exponential increase in $\alpha_{blind}$ as the number of bits needed to encode the $M$ symbols, i.e., $\log_2 M$, increases. This is also shown in Table \ref{Table: alpha_blind_bits}. Note that, if the sensors use one additional quantization-bit (2-bit quantization) in their quantization scheme instead of 1-bit quantization (binary quantization), then the $\alpha_{blind}$ increases from 0.5 to 0.75. This trend is observed with increasing number of quantization bits, and when the sensors employ an 8-bit quantizer, then the attacker needs to compromise at least 99.6\% of the sensors in the network to blind the FC. Obviously, the improvement in security performance is not free as the sensors incur a communication cost in terms of energy and bandwidth as the number of quantization bits increases. Therefore, in a practical world, the network designer faces a trade-off between the communication cost and the security guarantees.

Also, note that, when $M = 2$ (1-bit quantization), our results coincide with those of Rawat \emph{et al.} in \cite{Rawat2011}, where the focus was on the problem of binary hypotheses testing in a distributed sensor network. On the other hand, our results are more general as they address any inference problem - detection, estimation or classification in a distributed sensor network. Another extreme case to note is when $M \rightarrow \infty$, in which case, $\alpha_{blind} \rightarrow 1$. This means that the Byzantine attacker cannot blind the FC unless all the sensors are compromised.

In the following sections, we consider distributed detection and estimation problems in sensor networks and analyze the impact of the optimal Byzantine attack on these systems. For the sake of tractability, we consider a noiseless channel ($\mathbb{Q} = \mathbb{I}$) at the FC in the framework of resource-constrained Byzantine attack. Therefore, according to Theorem~\ref{thrm: Optimal Attack}, we restrict our attention to the set of highly-symmetric $\mathbb{P}$ for the sake of tractability. In other words, we assume that 
\begin{equation}
	p_{lm} = 
	\begin{cases}
		p & \text{if } l \neq m
		\\
		1 - (M-1)p & \text{otherwise.}
	\end{cases}
	\label{Eqn: Symmetric flipping matrix}
\end{equation}

\section{Distributed Detection in the Presence of Resource-Constrained Byzantine Attacks \label{sec: Detection}}

In this section, we consider a resource-constrained Byzantine attack on binary hypotheses testing in a distributed sensor network where the phenomenon of interest is denoted as $\theta$ and is modeled as follows:
\begin{equation}
	\theta = 
	\begin{cases}
		0; \mbox{ if } H_0
		\\
		1; \mbox{ if } H_1
	\end{cases}.
\end{equation}

In order to characterize the performance of the FC, we consider Kullback-Leibler Divergence (KLD) as the performance metric. Note that KLD can be interpreted as the error exponent in the Neyman-Pearson detection framework \cite{Chamberland2003}, which means that the probability of missed detection goes to zero exponentially with the number of sensors at a rate equal to KLD computed at the FC. We denote KLD at the FC by $\mathbb{D}_{FC}$ and define it as follows:
\begin{equation}
	\begin{array}{l}
		\mathbb{D}_{FC} \ = \ \displaystyle \mathbb{E}_{H_0} \left[ \log \left( \frac{P(\mathbf{v}|H_0)}{P(\mathbf{v}|H_1)} \right) \right] 
		\\
		\\
		\qquad \ \ = \displaystyle \sum_{\mathbf{m} \in \{1, \cdots, M\}^N} P(\mathbf{v} = \mathbf{m}|H_0) \cdot \log \left( \frac{P(\mathbf{v} = \mathbf{m}|H_0)}{P(\mathbf{v} = \mathbf{m}|H_1)} \right)
	\end{array}
	\label{Eqn: KLD defn.}
\end{equation}
Since we have assumed that the sensor observations are conditionally independent,\footnote{For notational convenience, sensor index $i$ is ignored in the rest of the paper.} KLD can be expressed as
\begin{equation}
	\mathbb{D}_{FC} = \displaystyle N D_{FC},
	\label{Eqn: KLD defn. - independent}
\end{equation}
where $$D_{FC} = \displaystyle \sum_{m = 1}^M P(v = m|H_0) \cdot \log \left( \frac{P(v=m|H_0)}{P(v = m|H_1)} \right).$$
Note that the optimal Byzantine attack, as given in Equation (\ref{Eqn: Optimal Attack}), results in equiprobable symbols at the FC irrespective of the hypotheses. Therefore, $\mathbb{D}_{FC} = 0$ under optimal Byzantine attack, resulting in the blinding of the FC.

On the other hand, if the attacker does not have enough resources to compromise $\alpha_{blind}$ fraction of sensors in the network (i.e. $\alpha < \alpha_{blind}$), an optimal strategy for the Byzantine node is to use an appropriate $\mathbb{P}$ matrix that deteriorates the performance of the sensor network to the maximal extent. As mentioned earlier in Section \ref{sec: Analysis}, we restrict our search to finding the optimal $\mathbb{P}$ within a space of highly symmetric row-stochastic matrices, as given in Equation \eqref{Eqn: Symmetric flipping matrix}. Thus, we formulate the problem as follows.

\begin{prob-statement}
	\label{Prob. Form: constrained attack - detection - symmetric}
	Given the value of $\alpha<\alpha_{blind}$, find the optimal $\mathbb{P}$ within a space of highly symmetric row-stochastic matrices, as given in Equation \eqref{Eqn: Symmetric flipping matrix}, such that
    \begin{equation*}
    	\begin{array}{ll}
    		\displaystyle \minimize_{p} & D_{FC}
        	\\ \text{subject to} & 0 \leq p \leq \displaystyle \frac{1}{M-1}
    	\end{array}
    \end{equation*}
\end{prob-statement}

Theorem \ref{Thrm: Optimal Byzantine Attack - constrained - symmetric} presents the optimal flipping probability that provides the solution to Problem \ref{Prob. Form: constrained attack - detection - symmetric}. Note that this result is independent of the design of the sensor network and, therefore, can be employed when the Byzantine has no knowledge about the network.

\begin{thrm}
	Given a fixed $\alpha < \displaystyle \frac{M-1}{M}$, the probability $p$ that optimizes $\mathbb{P}$ within a space of highly symmetric row-stochastic matrices, as given in Equation \eqref{Eqn: Symmetric flipping matrix}, such that $D_{FC}$ is minimized, is given by
	\begin{equation}
		p^* = \frac{1}{M-1}.
		\label{Eqn: Optimal flipping prob - symmetric}
	\end{equation}
	\label{Thrm: Optimal Byzantine Attack - constrained - symmetric}
\end{thrm}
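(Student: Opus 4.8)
The plan is to first collapse the symmetric flipping matrix into a single scalar mixing parameter, and then to reduce the minimization to a monotonicity statement about a Kullback--Leibler divergence. Substituting the highly-symmetric form \eqref{Eqn: Symmetric flipping matrix} into \eqref{Eqn: Cond. Prob vi given theta} and using $\sum_{l\neq m}P(u=l|\theta)=1-P(u=m|\theta)$, I expect the received-symbol distribution to simplify to
\begin{equation*}
P(v=m|\theta) = (1-\alpha M p)\,P(u=m|\theta) + \alpha p, \qquad m=1,\dots,M.
\end{equation*}
Writing $a := 1-\alpha M p$ and noting $\alpha p = (1-a)/M$, this exhibits $P(v=\cdot|\theta)$ as the convex combination $a\,P(u=\cdot|\theta) + (1-a)\,\mathcal{U}$, where $\mathcal{U}$ is the uniform pmf on $\{1,\dots,M\}$. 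On the feasible interval $0\le p\le 1/(M-1)$ the weight $a$ decreases monotonically from $1$ down to $1-\frac{\alpha M}{M-1}$, which is strictly positive precisely because $\alpha<\frac{M-1}{M}$; hence each $P(v=\cdot|\theta)$ remains a legitimate pmf, and $p=1/(M-1)$ corresponds to the smallest admissible value of $a$.

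Let $q_m := P(u=m|H_0)$ and $r_m := P(u=m|H_1)$ denote the pre-attack conditional pmfs. The objective then becomes
\begin{equation*}
D_{FC} = \sum_{m=1}^M \left(a q_m + \frac{1-a}{M}\right)\,\log\frac{a q_m + (1-a)/M}{a r_m + (1-a)/M},
\end{equation*}
which I denote by $f(a)$. Minimizing $D_{FC}$ over $p\in[0,1/(M-1)]$ is thus equivalent to minimizing $f(a)$ over $a\in[1-\frac{\alpha M}{M-1},\,1]$. Because the map $p\mapsto a$ is order-reversing, it suffices to prove that $f$ is non-decreasing on $[0,1]$: the minimizing $a$ is then the left endpoint $1-\frac{\alpha M}{M-1}$, i.e. $p^{*}=1/(M-1)$.

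The heart of the argument is this monotonicity of $f$, which I would establish structurally rather than by direct differentiation. The two arguments of the divergence defining $f(a)$ are affine in $a$, and the map $(x,y)\mapsto \sum_m x_m\log(x_m/y_m)$ is jointly convex; hence $f$ is convex on $[0,1]$. Since $f\ge 0$ with $f(0)=0$ (at $a=0$ both arguments equal $\mathcal{U}$), the value $0$ is a global minimum attained at the left endpoint, and for $0\le a_1\le a_2$ the convexity bound $f(a_1)\le \frac{a_1}{a_2}f(a_2)\le f(a_2)$ shows $f$ is non-decreasing. Equivalently, one can argue by data processing: the channel $T_a$ with $T_a(m|l)=a\,\delta_{lm}+(1-a)/M$ satisfies the semigroup identity $T_aT_{a'}=T_{aa'}$, so for $a_1\le a_2$ one writes $T_{a_1}=T_{a_1/a_2}\,T_{a_2}$ and applies the data-processing inequality to the channel $T_{a_1/a_2}$ to obtain $f(a_1)\le f(a_2)$. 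Either route forces the minimizer to the left endpoint, giving $p^{*}=1/(M-1)$.

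The main obstacle is precisely this monotonicity step: a head-on minimization via $\mathrm{d}D_{FC}/\mathrm{d}p$ produces logarithmic terms whose sign is not transparent, so the crux is to recognize the mixture-with-uniform structure that makes either joint convexity of the KL divergence or the data-processing inequality directly applicable. A minor but necessary bookkeeping point is to confirm $a>0$ throughout the feasible set (guaranteed by $\alpha<\frac{M-1}{M}$), so that the divergence stays finite and the endpoint $p^{*}=1/(M-1)$ lies strictly below the blinding value $1/(\alpha M)$ at which $a$ would vanish and $D_{FC}$ would collapse to zero.
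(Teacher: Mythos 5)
Your proposal is correct, and it reaches the conclusion by a genuinely different route from the paper. The paper's proof works head-on: it writes $\tilde{x}_m=\alpha p+(1-M\alpha p)x_m$ and $\tilde{y}_m=\alpha p+(1-M\alpha p)y_m$ (your identity), differentiates $D_{FC}$ with respect to $p$, and after several algebraic regroupings lands on
\[
\frac{\partial D_{FC}}{\partial p}=-\frac{M\alpha}{1-M\alpha p}\,D(\mathbf{\tilde{x}}\,\|\,\mathbf{\tilde{y}})+\frac{\alpha}{1-M\alpha p}\sum_{m=1}^{M}\Bigl[\log\tfrac{\tilde{x}_m}{\tilde{y}_m}-\bigl(\tfrac{\tilde{x}_m}{\tilde{y}_m}-1\bigr)\Bigr],
\]
which is nonpositive because the KLD is nonnegative and $\log x\leq x-1$; monotonicity then forces $p^*=1/(M-1)$. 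You instead recognize the attacked pmf as the mixture $a\,P(u=\cdot|\theta)+(1-a)\,\mathcal{U}$ with $a=1-\alpha Mp$ and prove the same monotonicity structurally, either from joint convexity of $(x,y)\mapsto\sum_m x_m\log(x_m/y_m)$ together with $f(0)=0$, or from the data-processing inequality via the semigroup identity $T_aT_{a'}=T_{aa'}$; both of your sub-arguments are sound, and your bookkeeping that $a>0$ on the feasible set (equivalently $1/(M-1)<1/(\alpha M)$ when $\alpha<(M-1)/M$) is exactly the right check. What each approach buys: the paper's calculation is elementary and self-contained but opaque, and its sign argument is tied to the specific form of the KLD; your argument is shorter, explains \emph{why} the extreme point wins (more mixing with the uniform is a further degradation by a channel), and generalizes immediately to any $f$-divergence and, via the same data-processing reasoning, to the Fisher-information objective of the companion theorem, which the paper again handles by a separate explicit differentiation.
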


\begin{proof}
See Appendix \ref{App: Theorem - Optimal Byzantine Attack - constrained - symmetric}.
\end{proof}

Note that this solution is of particular interest to the Byzantine attacker since the solution does not require any knowledge about the sensor network design. Also, the attacker's strategy is very simple to implement.

\subsection*{Numerical Results \label{sec: Numerical Results - Detection}}

For illustration purposes, let us consider the following example, where the inference network is deployed to aid the opportunistic spectrum access for a cognitive radio network (CRN). In other words, the CRs are sensing a licensed spectrum band to find the vacant band for the operation of the CRN.

Let the observation model at the $i^{th}$ sensor be defined as follows.
\begin{equation}
	r_i = s(\theta) + n_i,
	\label{Eqn: Observation Model - example - detection}
\end{equation}
where $\theta\in\{0,1\}$, $s(\theta) = \mu \cdot (-1)^{1 + \theta}$ is a BPSK-modulated symbol transmitted by the licensed (or the primary) user transmitter, and the noise $n_i$ is the AWGN at the $i^{th}$ sensor with probability distribution $\mathcal{N}(0, \sigma^2)$. 

Therefore, the conditional distribution of $r_i$ under $H_0$ and $H_1$ can be given as $\mathcal{N}(-\mu, \sigma^2)$ and $\mathcal{N}(\mu, \sigma^2)$ respectively. The range of $r_i$ spans the entire real line ($\mathbb{R}$). However, we assume that the quantizer restricts the support by limiting the range of output values to a smaller range, say $[-A, A]$. This parameter $A$ is called the \emph{overloading} parameter \cite{Book-Proakis-DSP} because the choice of $A$ dictates the amount of overloading distortion caused by the quantizer. Within this restricted range of observations, we assume a uniform quantizer with a step size (called the \emph{granularity} parameter) given by $\Delta = \frac{2}{M-2}$, which dictates the granularity distortion of the quantizer. In other words, the observation $r_i$ is quantized using the following quantizer:
\begin{equation}
	u_i = 
	\begin{cases}
		0; & \text{if } -\infty < r_i \leq \lambda_1
		\\
		1; & \text{if } \lambda_1 < r_i \leq \lambda_2
		\\
		\vdots
		\\
		M-1; & \text{if } \lambda_{M-1} < r_i \leq \infty
	\end{cases},
	\label{Eqn: Quantizer - Example - detection}
\end{equation}
where $$\lambda_i = A \cdot \left[ \frac{2(i-1)}{M-2} - 1 \right].$$ 

Note that, $\lambda_1 = -A$ and $\lambda_{M-1} = A$ represent the restricted range of the quantizer, as discussed earlier. The $i^{th}$ sensor transmits a symbol $v_i$ to the FC, where $v_i = u_i$ if it is honest. In the case of the $i^{th}$ sensor being a Byzantine node, the decision $u_i$ is modified into $v_i$ using the flipping probability matrix $\mathbb{P}$ as given in Equation (\ref{Eqn: Optimal Attack}).

\ignore{Note that we abstract the modulation of these quantized symbols in our paper. This is because of our assumption of an ideal channel between the sensors and the fusion center. In practice, each of these symbols are transformed into sequence of $\lceil \log_2(M) \rceil$ bits. These bits are later encoded using an appropriate error-correcting codebook, and modulated either using M-PSK/M-QAM constellation. The use of a powerful error-correcting code along with an appropriate modulation scheme ensures an almost-ideal channel at the FC.}

Although the performance of a given sensor network is quantified by the probability of error at the FC, we use a surrogate metric, as described earlier, called the KLD at the FC (Refer to Equation \eqref{Eqn: KLD defn.}) for the sake of tractability. In an asymptotic sense, Stein's Lemma \cite{Chamberland2003} states that the KLD is the rate at which the probability of missed detection converges to zero under a constrained probability of false alarm. Therefore, in our numerical results, we present how KLD at the FC varies with the fraction of Byzantine nodes $\alpha$, in the network.

\begin{figure}[!t]
	\centering
    \includegraphics[width=4in]{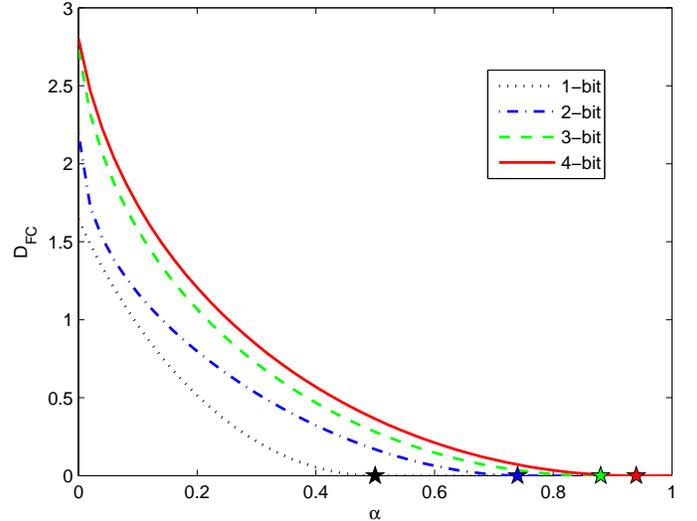}
    \caption{Contribution of a sensor to the overall KLD at the fusion center as a function of $\alpha$, for different number of quantization levels. The pentagrams on the x-axis correspond to the $\alpha_{blind}$ for 1-bit, 2-bit, 3-bit and 4-bit quantizations respectively from left to right.}
    \label{Fig: KLD_alpha_M}
\end{figure}

For the above sensor network, we assume that $\mu = 1$, $\sigma^2 = 1$ and $A = 2$. In Figure \ref{Fig: KLD_alpha_M}, we plot the contribution of each sensor in terms of KLD at the FC as a function of $\alpha$, for 1-bit, 2-bit, 3-bit and 4-bit quantizations, i.e., $M = $ 2, 4, 8 and 16 respectively, at the sensors. As per our intuition, we observe an improvement in both the detection performance (KLD) as well as security performance ($\alpha_{blind}$). Therefore, for a given $\alpha$, the Byzantine attack can be mitigated by employing finer quantization at the sensors. Of course, the best that the designer can do is to let the sensors transmit unquantized data to the FC, whether in the form of observation samples or their sufficient statistic (likelihood ratio). In this case, we can see that $\alpha_{blind} = 1$, since $\displaystyle \lim_{M \rightarrow \infty} \frac{M-1}{M} = 1$.

\section{Distributed Estimation in the Presence of Resource-Constrained Byzantine Attacks\label{sec: Estimation}}

In this section, we consider the problem of estimating a scalar parameter of interest, denoted by $\theta \in \mathbb{R}$, in a distributed sensor network. As described in the system model, we assume that the $i^{th}$ sensor quantizes its observation $r_i$ into an M-ary symbol $u_i$, and transmits $v_i$ to the FC. If the $i^{th}$ node is honest, then $v_i = u_i$. Otherwise, we assume that the sensor is compromised and flips $u_i$ into $v_i$ using a flipping probability matrix $\mathbb{P}$. Under the assumption that the FC receives the symbols $\mathbf{v}$ over an ideal channel, the estimation performance at the FC depends on the probability mass function $P(\mathbf{v}|\theta)$.

The performance of a distributed estimation network can be expressed in terms of the mean-squared error, defined as $\mathbb{E}\left[ (\hat{\theta} - \theta)^2 \right]$. In the case of unbiased estimators, this mean-squared error is lower bounded by the \emph{Cramer-Rao lower bound} (CRLB) \cite{Ribeiro2006}, which provides a benchmark for the design of an estimator at the FC. We present this result in Equation (\ref{Eqn: CRLB}):

\begin{equation}
	\displaystyle \mathbb{E}\left[ (\hat{\theta}(\mathbf{v}) - \theta)^2 \right] \geq \frac{1}{I_{FC}},
	\label{Eqn: CRLB}
\end{equation}
where
\begin{equation}
	\displaystyle I_{FC} = \mathbb{E}\left[ \left( \frac{\partial \log P(\mathbf{v}, \theta)}{\partial \theta} \right)^2 \right].
	\label{Eqn: FI}
\end{equation}
The term $I_{FC}$ is well known as the Fisher information (FI), and is, therefore, a performance metric that captures the performance of the optimal estimator at the FC. Note that, as shown in Equation \eqref{Eqn: FI Decomposition}, $I_{FC}$ can be further decomposed into two parts, one corresponding to the prior knowledge about $\theta$ at the FC, and the other (denoted as $J_{FC}$) representing the information about $\theta$, in the sensor transmissions $\mathbf{v}$:
\begin{equation}
	\displaystyle I_{FC} = J_{FC} + \mathbb{E}\left[ \left( \frac{\partial \log p(\theta)}{\partial \theta} \right)^2 \right],
	\label{Eqn: FI Decomposition}
\end{equation}
where
\begin{equation}
	\displaystyle J_{FC} = \mathbb{E}\left[ \left( \frac{\partial \log P(\mathbf{v} | \theta)}{\partial \theta} \right)^2 \right].
	\label{Eqn: FI decomposed}
\end{equation}

In most cases, a closed form expression for the mean-squared error is intractable and, therefore, conditional Fisher information (FI) is used as a surrogate metric to quantify the performance of a distributed estimation network. In this paper, we also use conditional FI of the received data $\mathbf{v}$ as the performance metric. Since the sensor observations are conditionally independent resulting in independent $\mathbf{v}$, we denote the conditional FI as $\mathbb{J}_{FC}$ and is defined as follows:
\begin{equation}
	\mathbb{J}_{FC} = N J_{FC},
	\label{Eqn: FI defn.}
\end{equation}
where 

\begin{equation}
	J_{FC} = \displaystyle \mathbb{E} \left[ \frac{\partial}{\partial \theta}\log P(\mathbf{v}|\theta) \right]^2 = - \mathbb{E} \left[ \frac{\partial^2}{\partial \theta^2}\log P(\mathbf{v}|\theta) \right].
	\label{Eqn: J_FC}
\end{equation}


Following the same approach as in Section \ref{sec: Detection}, we consider the problem of finding an optimal resource-constrained Byzantine attack when $\alpha < \alpha_{blind}$, by finding the symmetric transition matrix $\mathbb{P}$ that minimizes the conditional FI at the FC. This can be formulated as follows. 


%
%
%
\begin{prob-statement}
	\label{Prob. Form: constrained attack - estimation - symmetric}
	Given the value of $\alpha$, determine the optimal $\mathbb{P}$ within a space of highly symmetric row-stochastic matrices, as given in Equation \eqref{Eqn: Symmetric flipping matrix}, such that
    \begin{equation*}
    	\begin{array}{ll}
    		\displaystyle \minimize_{p} & J_{FC}
        	\\ \text{subject to} & 0 \leq p \leq \displaystyle \frac{1}{M-1}
    	\end{array}.
    \end{equation*}
\end{prob-statement}
Theorem \ref{Thrm: Optimal Byzantine Attack - estimation - constrained - symmetric} presents the optimal flipping probability that provides a solution to Problem \ref{Prob. Form: constrained attack - estimation - symmetric}. Note that this result is independent of the design of the sensor network and, therefore, can be employed when the Byzantine has no knowledge about the network.

\begin{thrm}
	Given a fixed $\alpha < \displaystyle \frac{M-1}{M}$, the flipping probability $p$ that optimizes $\mathbb{P}$ over a space of highly symmetric row-stochastic matrices, as given in Equation \eqref{Eqn: Symmetric flipping matrix}, by minimizing $J_{FC}$ is given by 
	\begin{equation*}
		p^* = \frac{1}{M-1}.
	\end{equation*}
	\label{Thrm: Optimal Byzantine Attack - estimation - constrained - symmetric}
\end{thrm}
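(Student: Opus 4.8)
The plan is to reduce the conditional Fisher information to an explicit one-variable function of $p$ and show that it is monotonically decreasing on the feasible interval $\left[0, \frac{1}{M-1}\right]$, so that the constrained minimum is forced to the right endpoint. The first step is to substitute the highly symmetric matrix of Equation \eqref{Eqn: Symmetric flipping matrix} into the expression for $P(v = m \mid \theta)$ from Equation \eqref{Eqn: Cond. Prob vi given theta}. Writing $q_m(\theta) = P(u = m \mid \theta)$ and using $\sum_{l \neq m} q_l(\theta) = 1 - q_m(\theta)$, a short computation collapses the sum to the affine (mixture) form
\begin{equation*}
	P(v = m \mid \theta) = (1 - t)\, q_m(\theta) + \frac{t}{M}, \qquad t := \alpha M p .
\end{equation*}
That is, the symmetric attack simply blends the honest conditional law with the uniform (completely uninformative) distribution, placing weight $t$ on the uniform part.

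Next I would record the range of the reparametrized variable. As $p$ runs over $\left[0, \frac{1}{M-1}\right]$, the quantity $t = \alpha M p$ is strictly increasing and sweeps $\left[0, \frac{\alpha M}{M-1}\right]$; crucially, since $\alpha < \frac{M-1}{M} = \alpha_{blind}$, the right endpoint satisfies $\frac{\alpha M}{M-1} < 1$, so $1 - t > 0$ throughout. Minimizing $J_{FC}$ over $p$ is therefore equivalent to minimizing it over $t$ on a closed interval contained in $[0,1)$, and (because $t$ increases with $p$) it suffices to show $J_{FC}$ decreases in $t$.

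The core step is this monotonicity argument. Since $\frac{\partial}{\partial\theta} P(v=m\mid\theta) = (1-t)\, q_m'(\theta)$, the definition of conditional Fisher information in Equation \eqref{Eqn: J_FC} yields
\begin{equation*}
	J_{FC}(t) = \sum_{m=1}^M \frac{(1-t)^2 \left( q_m'(\theta) \right)^2}{(1-t)\, q_m(\theta) + t/M}.
\end{equation*}
I would differentiate each summand in $t$; the numerator of the derivative factors cleanly as $-(1-t)\left[ (1-t)\, q_m(\theta) + \frac{1+t}{M} \right]$ multiplied by $\left(q_m'(\theta)\right)^2$. This is strictly negative because $1 - t > 0$ while the bracketed quantity is a sum of non-negative terms, so each summand is strictly decreasing (provided some $q_m'(\theta) \neq 0$, i.e. the data are genuinely informative; otherwise $J_{FC} \equiv 0$ and every $p$ is trivially optimal). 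Hence $J_{FC}$ is strictly decreasing in $t$, and thus in $p$, so the minimum over the feasible set is attained at the largest admissible value $p^* = \frac{1}{M-1}$, which is the claim.

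I expect the sign analysis of the derivative to be the main obstacle: one must track the simultaneous dependence of the numerator factor $(1-t)^2$ and the denominator on $t$, and the clean factorization — and hence the definite sign — relies precisely on the hypothesis $\alpha < \alpha_{blind}$, which guarantees $1 - t > 0$ over the whole interval. It is worth noting the structural parallel with the detection case: this is exactly the mixing-with-uniform mechanism underlying Theorem \ref{Thrm: Optimal Byzantine Attack - constrained - symmetric}, with the conditional Fisher information now playing the role that the KL divergence $D_{FC}$ plays there.
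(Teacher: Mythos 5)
Your proposal is correct and follows essentially the same route as the paper's proof: both express $P(v=m\mid\theta)$ as the affine map $\alpha p + (1-M\alpha p)P(u=m\mid\theta)$, differentiate $J_{FC}$ with respect to the flipping probability, and conclude from $1-M\alpha p>0$ (guaranteed by $\alpha<\alpha_{blind}$) that the derivative is non-positive, forcing the minimum to the endpoint $p^*=1/(M-1)$. Your reparametrization $t=\alpha M p$ and per-summand factorization (which collapses to $-(1-t)\left[M\alpha\sum_m (z_m')^2/\tilde{z}_m+\alpha\sum_m (z_m')^2/\tilde{z}_m^2\right]$, matching a direct computation) is only a cosmetic variant of the paper's grouped computation, and your explicit handling of the degenerate case $q_m'\equiv 0$ is a minor bonus rather than a different method.
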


\begin{proof}
See Appendix \ref{App: Theorem - Optimal Byzantine Attack - estimation - constrained - symmetric}.
\end{proof}

\subsection*{Numerical Results \label{sec: Numerical Results - Estimation}}

As an illustrative example, we consider the problem of estimating $\theta = 1$ at the FC based on all the sensors' transmitted messages. Let the observation model at the $i^{th}$ sensor be defined as follows:
\begin{equation}
	r_i = \theta + n_i,
	\label{Eqn: Observation Model - example - detection}
\end{equation}
where the noise $n_i$ is the AWGN at the $i^{th}$ sensor with probability distribution $\mathcal{N}(0, \sigma^2)$. The sensors employ the same quantizer as the one presented in Equation (\ref{Eqn: Quantizer - Example - detection}). The quantized symbol, denoted as $u_i$ at the $i^{th}$ sensor, is then modified into $v_i$ using the flipping probability matrix $\mathbb{P}$, as given in Equation (\ref{Eqn: Optimal Attack}).


Figure \ref{Fig: FI_alpha_M_middle}\ignore{ and \ref{Fig: FI_alpha_M_skewed}} plots the conditional FI corresponding to one sensor, for different values of $\alpha$ and $M$, when the uniform quantizer is centered around the true value of $\theta$. Note that as SNR increases ($\sigma \rightarrow 0$), we observe that it is better for the network to perform as much finer quantization as possible to mitigate the Byzantine attackers. On the other hand, if SNR is low, coarse quantization performs better for lower values of $\alpha$. This phenomenon of coarse quantization performing better under low SNR scenarios, can be attributed to the fact that more noise gets filtered as the quantization gets coarser (decreasing $M$) than the signal itself. On the other hand, in the case of high SNR, since the signal level is high, coarse quantization cancels out the signal component significantly, thereby resulting in a degradation in performance. 

\begin{figure*}[!t]
	\centerline
    {
    	\subfloat[Low SNR case: $\sigma = 1$]{\includegraphics[width=4in]{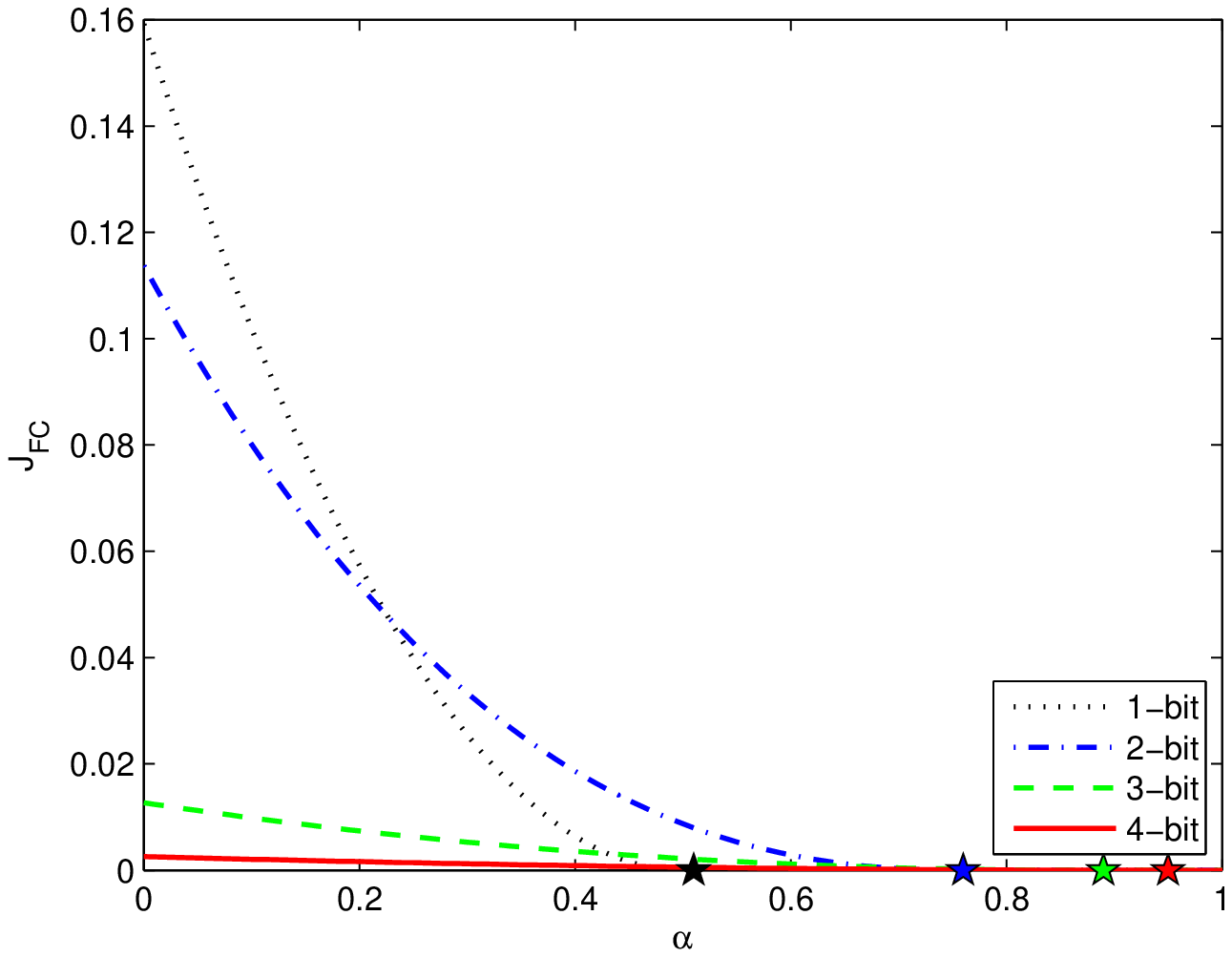}%
        \label{Fig: FI_alpha_M_lowSNR}}
        \hfil
        \subfloat[High SNR case: $\sigma = 0.01$]{\includegraphics[width=4in]{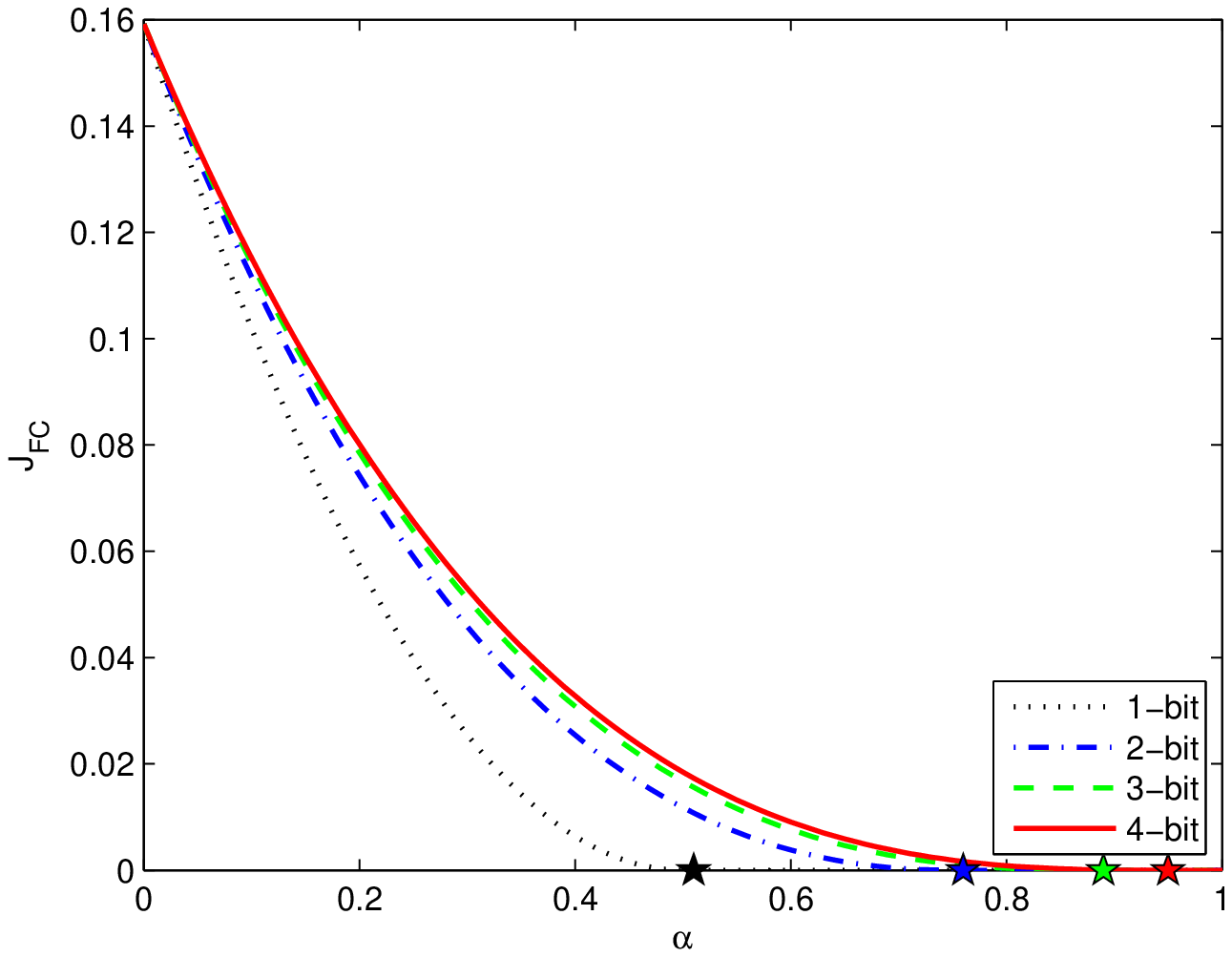}%
        \label{Fig: FI_alpha_M_highSNR}}
    }
    \caption{Contribution of a sensor to the overall conditional FI at the FC as a function of $\alpha$, for different number of quantization levels when $\theta = 0$ and $A = 2$. The pentagrams on the x-axis correspond to the $\alpha_{blind}$ for 1-bit, 2-bit, 3-bit and 4-bit quantizations respectively from left to right.}
    \label{Fig: FI_alpha_M_middle}
\end{figure*}


\section{Mitigation of Byzantine Attacks in a Bandwidth-Constrained Inference Network\label{sec: Reputation - Byzantine Identification}}

Given that the distributed inference network is under Byzantine attack, we showed that the performance of the network can be improved by increasing the quantization alphabet size of the sensors. Obviously, in a bandwidth-constrained distributed inference network, the sensors can only transmit with the maximum possible $M$, which is finite. In this section, we assume that the network cannot further increase the quantization alphabet size due to this bandwidth constraint. Therefore, we present a reputation-based Byzantine identification/mitigation scheme, which is an extension of the one proposed by Rawat \emph{et al.} in \cite{Rawat2011}, in order to improve the inference performance of the network. 

\subsection{Reputation-Tagging at the Sensors}
As proposed by Rawat \emph{et al.} in \cite{Rawat2011}, the FC identifies the Byzantine nodes by iteratively updating a reputation-tag for each node as time progresses. We extend the scheme to include fine quantization scenarios, i.e., $M > 2$, and analyze its performance through simulation results.

As mentioned earlier in the paper, the FC receives a vector $\mathbf{v}$ of received symbols from the sensors and fuses them to yield a global decision, denoted as $\hat{\theta}$. We assume that the observation model is known to the network designer, and is given as follows:
\begin{equation}
	r_i = f_i(\theta) + n_i,
	\label{Eqn: Model - Observation}
\end{equation}
where $f_i(\cdot)$ denotes the known observation model. We denote the quantization rule employed at the sensor as $\gamma$. Therefore, the quantized message at the sensor is given by $u_i = \gamma(r_i)$. As discussed earlier, the $i^{th}$ sensor flips $u_i$ into $v_i$ using a flipping probability matrix $\mathbb{P}$. 
Since the FC makes a global inference $\hat{\theta}$, it can calculate the squared-deviation $d_i$ of each sensor from the expected message that it is to nominally transmit as follows:
\begin{equation}
	d_i = \left( \gamma^{-1}(v_i) - f_i(\hat{\theta}) \right)^2,
	\label{Eqn: Reputation - noise estimate}
\end{equation}
where $\gamma^{-1}(v_i)$ is the inverse of the sensor quantizer $\gamma(v_i)$ and it  is assumed to be the centroid of the corresponding decision region of the quantizer $v_i$.

Note that $v_i$ is the received symbol which characterizes the behavior (honest or Byzantine) of the $i^{th}$ sensor, while $f_i(\hat{\theta})$ is the signal that the FC expects the sensor to observe. If the $i^{th}$ sensor is honest, we expect the mean of $d_i$ to be small. On the other hand, if the $i^{th}$ sensor is a compromised node, then the mean of $d_i$ is expected to be large. Therefore, we accumulate the squared-deviations $\mathbf{d_i} = \{ d_i(1), \cdots, d_i(T) \}$ over $T$ time intervals and compute a reputation tag $\Lambda_i(\mathbf{d_i})$, as a time-average for the $i^{th}$ node as follows:
\begin{equation}
	\Lambda_i = \displaystyle \frac{1}{T} \sum_{t = 1}^T d_i(t).
	\label{Eqn: Reputation - Tag}
\end{equation}  
The $i^{th}$ sensor is declared honest/Byzantine using the following threshold-based tagging rule
\begin{equation}
	\Lambda_i \quad \mathop{\stackrel{\text{Byzantine}}{\gtrless}}_{\text{Honest}} \quad \eta.
	\label{Eqn: Reputation - Threshold rule}
\end{equation}

The performance of the above tagging rule depends strongly on the choice of $\eta$. Note that the threshold $\eta$ should be chosen based on two factors. Firstly, $\eta$ should be chosen in such a way that the probability with which a malicious node is tagged Byzantine is high. Higher the value of $\eta$, lower is the chance of tagging a node to be Byzantine and vice-versa. This results in a tradeoff between the probability of detecting a Byzantine vs. the probability of falsely tagging an honest node as a Byzantine. Secondly, the value of $M$ also plays a role in the choice of $\eta$, and therefore, the performance of the tagging rule. We illustrate this phenomenon in our simulation results.

\begin{bluetext}
\subsection{Optimal Choice of the Tagging Threshold as $T \rightarrow \infty$}
In this paper, we denote the true type of the $i^{th}$ node as $\mathscr{T}_i$, where $\mathscr{T}_i = H$ corresponds to honest behavior, while $\mathscr{T}_i = B$ corresponds to Byzantine behavior, for all $i = 1, \cdots, N$. Earlier, in this section, we presented Equation (\ref{Eqn: Reputation - Threshold rule}) which allows us to make inferences about the true type. But, the performance of the Byzantine tagging scheme corresponding the $i^{th}$ sensor is quantified by the conditional probabilities $P(\Lambda_i \geq \eta | \mathscr{T}_i = \mathscr{T})$, for both $\mathscr{T} = H, B$. In order to find the optimal choice of $\eta$ in Equation (\ref{Eqn: Reputation - Threshold rule}), we continue with the Neyman-Pearson framework even in the context of Byzantine identification, where the goal is to maximize $P(\Lambda_i \geq \eta | \mathscr{T}_i = B)$, subject to the condition that $P(\Lambda_i \geq \eta | \mathscr{T}_i = H) \leq \xi$. 

To find these two conditional probabilities $P(\Lambda_i \geq \eta | \mathscr{T}_i = H)$ and $P(\Lambda_i \geq \eta | \mathscr{T}_i = B)$, we need a closed form expression of the conditional distributions, $P(\Lambda_i | \mathscr{T}_i = H)$ and $P(\Lambda_i | \mathscr{T}_i = B)$ respectively. In practice, where $T$ is finite, it is intractable to determine the conditional distribution of $\Lambda_i$, which is necessary to come up with the optimal choice of $\eta$. Therefore, in this paper, we assume that $T \rightarrow \infty$ and present an asymptotic choice of the tagging threshold $\eta$ used in Equation (\ref{Eqn: Reputation - Threshold rule}). 

As $T \rightarrow \infty$, since $d_i(t)$ is independent across $t = 1, \cdots, T$, due to central-limit theorem, $(\Lambda_i | \mathscr{T}_i = \mathscr{T}) \sim \mathcal{N}(\mu_{i,\mathscr{T}},\sigma_{i,\mathscr{T}})$, where 
\begin{equation}
	\begin{array}{lcl}
		\mu_{i,\mathscr{T}} & = & \displaystyle \mathbb{E}(\Lambda_i \ | \ \mathscr{T}_i = \mathscr{T})
		\\
		\\
		& = & \displaystyle \mathbb{E}\left[ \left( \gamma^{-1}(v_i(t)) - \hat{\theta}(t) \right)^2 | \ \mathscr{T}_i = \mathscr{T} \right]
	\end{array}
\end{equation}
and
\begin{equation}
	\begin{array}{lcl}
		\sigma_{i,\mathscr{T}}^2 & = & \displaystyle \var(\Lambda_i \ | \ \mathscr{T}_i = \mathscr{T})
		\\
		\\
		& = & \displaystyle \frac{1}{T} \var\left[ \left( \gamma^{-1}(v_i(t)) - \hat{\theta}(t) \right)^2 | \ \mathscr{T}_i = \mathscr{T} \right]
	\end{array}.
\end{equation}

In this paper, we do not present the final form of $\mu_{i,\mathscr{T}}$ and $\sigma_{i,\mathscr{T}}$ in order to preserve generality. Assuming that $v_i(t)$ is independent across sensors as well as time, the moments of $d_i$ can be computed for any given FC's inference $\hat{\theta}(t)$ at time $t$ about a given phenomenon. Although the final form of $\mu_{i,\mathscr{T}}$ and $\sigma_{i,\mathscr{T}}$ is not presented, since $d_i(t)$ is a function of $\mathbf{v}$, we present the conditional probability of $(v_j | \mathscr{T}_i = \mathscr{T})$ in Equation (\ref{Eqn: prob-v-node-type}), which is necessary for the computation of $\mu_{i,\mathscr{T}}$ and $\sigma_{i,\mathscr{T}}$. 
\begin{equation}
	\displaystyle P(v_j | \mathscr{T}_i = \mathscr{T}) = \int P(v_j | \theta, \mathscr{T}_i = \mathscr{T}) p(\theta) d\theta,
	\label{Eqn: prob-v-node-type}
\end{equation}
where $P(v_j | \theta, \mathscr{T}_i = \mathscr{T})$ can be calculated as follows:
\begin{equation}
	\begin{array}{l}
		\displaystyle P(v_j = m | \theta, \mathscr{T}_i = H) = 
		\\
		\\
		\qquad \qquad
		\begin{cases}
			\displaystyle P(u_j = m | \theta), & \mbox{if } j = i
			\\
			\\
			\displaystyle (1 - \pi_{BH}) P(u_j = m | \theta) \ +
			\\
			\quad \displaystyle \pi_{BH} \sum_{k = 1}^M p_{km} P(u_j = k | \theta), & \mbox{if } j \neq i
		\end{cases}
	\end{array}
	\label{Eqn: prob-v-node-type-honest-theta}
\end{equation}
and
\begin{equation}
	\begin{array}{l}
		\displaystyle P(v_j = m | \theta, \mathscr{T}_i = B) = 
		\\
		\\
		\qquad \qquad
		\begin{cases}
			\displaystyle \sum_{k = 1}^M p_{km} P(u_j = k | \theta), & \mbox{if } j = i
			\\
			\\
			\displaystyle (1 - \pi_{BB}) P(u_j = m | \theta) \ + 
			\\
			\quad \displaystyle \pi_{BB} \sum_{k = 1}^M p_{km} P(u_j = k | \theta), & \mbox{if } j \neq i
		\end{cases},
	\end{array}
	\label{Eqn: prob-v-node-type-Byzantine-theta}
\end{equation}
where $\pi_{BH} = P(\mathscr{T}_j = B | \mathscr{T}_i = H)$ and $\pi_{BB} = P(\mathscr{T}_j = B | \mathscr{T}_i = B)$ are conditional probabilities of the $j^{th}$ node's type, given the type of the $i^{th}$ node. Since there are $\alpha$ fraction of nodes in the network, given that the FC knows the type of $i^{th}$ node as $H$, the conditional probability of the $j^{th}$ node belonging to a type $\mathscr{T}$ is given by $\pi_{BH} = \displaystyle \frac{N\alpha}{N-1}$ and $\pi_{BB} = \displaystyle \frac{N\alpha - 1}{N-1}$.

Given the conditional distributions $P(\Lambda_i | \mathscr{T}_i = H)$ and $P(\Lambda_i | \mathscr{T}_i = B)$, we find the performance of the Byzantine identification scheme as follows:
\begin{equation}
	\begin{array}{lcl}
		P(\Lambda_i \geq \eta | \mathscr{T}_i = H) & = & \displaystyle Q \left( \frac{\eta - \mu_{i,H}}{\sigma_{i,H}} \right)
		\\
		\\
		P(\Lambda_i \geq \eta | \mathscr{T}_i = B) & = & \displaystyle Q \left( \frac{\eta - \mu_{i,B}}{\sigma_{i,B}} \right)
	\end{array}.
\end{equation}

Under the NP framework, the optimal $\eta$ can be chosen by letting $P(\Lambda_i \geq \eta | i = H) = \beta$, when $\Lambda_i$ is normally distributed conditioned on the true type of a given node. In other words,
\begin{equation}
	 \displaystyle Q \left( \frac{\eta - \mu_{i,H}}{\sigma_{i,H}} \right) = \xi
	\end{equation}
or equivalently,
\begin{equation}	
 \displaystyle \eta_{optimal} = \mu_{i,H} + \sigma_{i,H} Q^{-1}(\xi).
 \label{Eqn: Optimal Tagging Threshold}
\end{equation}

Note that, since $P(v_i | \mathscr{T}_i = H)$ is a function of $\alpha$, it follows that both $\mu_{i,H}$ and $\sigma_{i,H}$ are functions of $\alpha$. Although we do not provide a closed-form expression for $\eta$ as a function of $\alpha$, we provide the following example to portray how $\eta$ varies with different values of $\alpha$.

\subsubsection{Example: Variation of $\eta$ as a function of $\alpha$}
Consider a distributed estimation network with $N = 5$ identical nodes. Let the prior distribution of the true phenomenon $\theta$ be the uniform distribution $\mathcal{U}(0,1)$. We assume that the sensing channel is an AWGN channel where the sensor observations is given by $r_i = \theta + n_i$. Therefore, the conditional distribution of the sensor observations is $\mathcal{N}(\theta,\sigma^2)$, when conditioned on $\theta$. We assume that the sensors employ the quantizer rule shown in Equation (\ref{Eqn: Quantizer - Example - detection}) on their observations $r_i$. At the FC, we let $\gamma^{-1}(\cdot)$ be defined as the centroid function that returns $c_i = \displaystyle \frac{\lambda_{i-1} + \lambda_i}{2}$. Let $\hat{\theta} = \displaystyle \frac{1}{M} \sum_{i = 1}^N \gamma^{-1}(v_i(t))$ be the fusion rule employed at the FC to estimate $\theta$. 

Since the network comprises of identical nodes, without any loss of generality, we henceforth focus our attention on the reputation-based identification rule at sensor-1. Substituting the above mentioned fusion rule in the squared-deviation $d_1$ corresponding to sensor-1 in Equation (\ref{Eqn: Reputation - noise estimate}), we have
\begin{equation}
	\begin{array}{lcl}
		d_1 & = & \displaystyle \left( \gamma^{-1}(v_1) - \frac{1}{M} \sum_{i = 1}^5 \gamma^{-1}(v_i(t)) \right)^2
		\\
		\\
		& = & \displaystyle \left( \frac{M-1}{M} \gamma^{-1}(v_1) - \frac{1}{M} \sum_{i = 2}^5 \gamma^{-1}(v_i(t)) \right)^2.
	\end{array}
	\label{Eqn: Reputation - noise estimate - example}
\end{equation}
Let us denote $\phi_{ij} = \displaystyle \mathbb{E} \left\{ \left( \gamma^{-1}(v_i) \right)^j | \mathscr{T}_1 = H \right\} = \sum_{v_i = 1}^M \left[ \left( \gamma^{-1}(v_i) \right)^j P \left( v_i | \mathscr{T}_1 = H \right) \right]$, for all $i = 1, \cdots, 5$ and $j = 1, 2, \cdots, \infty$. Here, $ P \left( v_i | \mathscr{T}_1 = H \right)$ can be computed using Equation (\ref{Eqn: prob-v-node-type-honest-theta}) as follows:
\begin{equation}
	\begin{array}{l}
	 	P \left( v_i =m| \mathscr{T}_1 = H \right) 
	 	\\
	 	\\
	 	\qquad = \displaystyle \int_{-\infty}^\infty P \left( v_i=m | \theta, \mathscr{T}_1 = H \right) p(\theta) d\theta
	 	\\
	 	\\
	 	\qquad = \displaystyle \int_0^1 P \left( v_i=m | \theta, \mathscr{T}_1 = H \right) d\theta
		\\
		\\
		\ =
	 	\begin{cases}
	 		a_{1,m} & \mbox{if } i = 1
	 		\\
	 		\\
	 		\displaystyle \frac{N \alpha}{(N-1)(M-1)} + 
	 		\\
	 		\ \displaystyle \left(1 - \frac{M N \alpha}{(N-1)(M-1)} \right) a_{i,m} & \mbox{otherwise.}
	 	\end{cases}
 	\end{array}
\end{equation}
where $a_{i,m} = \displaystyle \int_0^1 P \left( u_i = m | \theta \right) d\theta$, for all $i = 1, \cdots, N$. Note that, since all the nodes in the network are identical, $P(u_i|\theta)$ is independent of the node-index $i$, and therefore, $\phi_{ij} = \phi_{2j}$, for all $i \neq 1$. 

Thus, the conditional mean and variance, $\mu_{1H}$ and $\sigma_{1H}^2$, are given as follows for the special case of $N=5$:
\begin{equation}
	\begin{array}{l}
		\mu_{1H} 
		\\
		\\
		\ = \displaystyle \mathbb{E}\left[ \left( \frac{M-1}{M} \gamma^{-1}(v_1) - \frac{1}{M} \sum_{i = 2}^5 \gamma^{-1}(v_i(t)) \right)^2 | \ \mathscr{T}_i = H \right]
		\\
		\\
		\ = \displaystyle \frac{1}{M^2} \mathbb{E}\left[ \left( (M-1) \gamma^{-1}(v_1) - \sum_{i = 2}^5 \gamma^{-1}(v_i(t)) \right)^2 | \ \mathscr{T}_i = H \right]
		\\
		\\
		\ = \displaystyle \frac{1}{M^2} \left[ (M-1)^2 \phi_{12} + 4 \phi_{22} + 12 \phi_{21}^2 - 8(M-1) \phi_{11} \phi_{21} \right]
	\end{array}
\end{equation}
and
\begin{equation}
	\begin{array}{lcl}
		\sigma_{1H}^2 & = & \displaystyle \frac{1}{T} \var\left[ \left( \gamma^{-1}(v_i(t)) - \hat{\theta}(t) \right)^2 | \ \mathscr{T}_i = H \right]
		\\
		\\
		& = & \displaystyle \frac{1}{T} \left\{ \Delta - \mu_{1H}^2 \right\},
	\end{array}
\end{equation}
where
\begin{equation}
	\begin{array}{l}
		\Delta 
		\\
		\\
		\ = \displaystyle \mathbb{E} \left[ \left( \frac{M-1}{M} \gamma^{-1}(v_1) - \frac{1}{M} \sum_{i = 2}^5 \gamma^{-1}(v_i(t)) \right)^4 | \ \mathscr{T}_i = H \right]
		\\
		\\
		\ = \displaystyle \frac{1}{M^4} \left[ (M-1)^4 \phi_{14} - 16(M-1)^3 \phi_{13} \phi_{21} \right.
		\\
		\qquad \qquad \displaystyle + 6(M-1)^2 \phi_{12} \{ 4 \phi_{22} + 12 \phi_{21}^2 \} 
		\\
		\qquad \qquad \displaystyle - 4(M-1)\phi_{11}(4\phi_{23} + 36 \phi_{22} \phi_{21} + 24 \phi_{21}^3) + 4 \phi_{24} 
		\\
		\qquad \qquad \displaystyle + 12 \phi_{23} \phi_{21} + 36(\phi_{23} \phi_{21} + \phi_{22}^2 + 2 \phi_{22} \phi_{21}^2) 
		\\
		\qquad \qquad \qquad \displaystyle \left. + 24(\phi_{21}^4 + 3 \phi_{22} \phi_{21}^2)\right].
	\end{array}
\end{equation}


\begin{figure*}[!t]
	\centerline
    {
    	\subfloat[$M = 2, \cdots, 7$]{\includegraphics[width=3.3in]{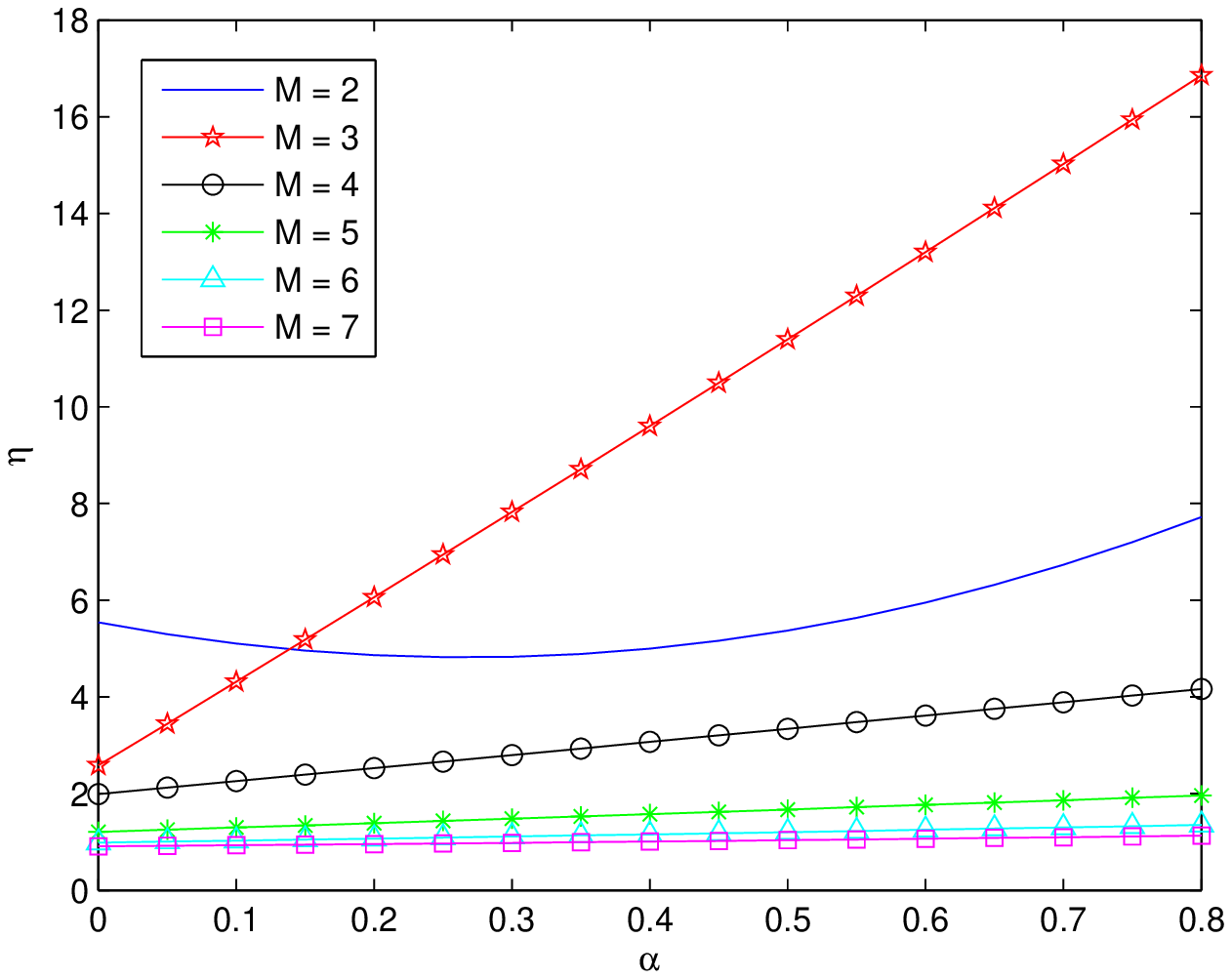}%
        \label{Fig: eta_vs_alpha_all}}
        \hfil
        \subfloat[$M = 7$]{\includegraphics[width=3.3in]{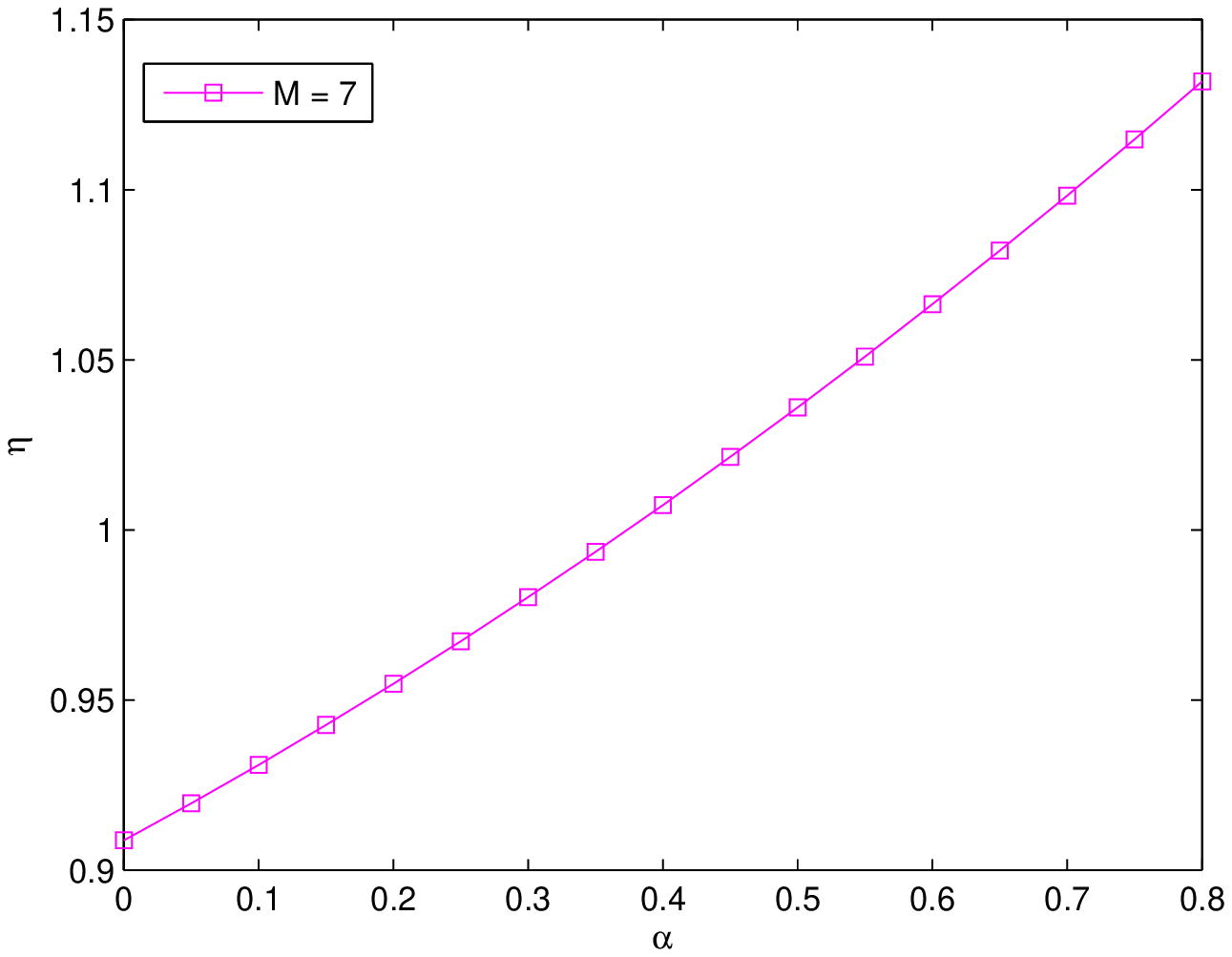}%
        \label{Fig: eta_vs_alpha_M_equals_7}}
    }
    \caption{Variation of the optimal tagging threshold $\eta$ (in the asymptotic sense, where $T \rightarrow \infty$) as a function of $\alpha$}
    \label{Fig: eta_vs_alpha}
\end{figure*}

Thus, for $\xi = 0.01$, we compute the tagging threshold $\eta$ numerically as shown in Equation (\ref{Eqn: Optimal Tagging Threshold}), and plot the variation of $\eta$ as a function of $\alpha$ in Figure \ref{Fig: eta_vs_alpha}. Note that, in our numerical results, we observe that the optimal choice of $\eta$ is a convex function of $\alpha$, where the curvature of the convexity decreases with increasing $M$. This can be clearly seen from Figure \ref{Fig: eta_vs_alpha_M_equals_7}, where we only plot the case of $M = 7$. We observe a similar behavior for all the other values of $M$, and therefore, present the case of $M = 7$ to illustrate the convex behavior of $\eta$. In other words, for very large values of $M$, the choice of $\eta$ becomes independent of $\alpha$, for any fixed $\alpha \leq \alpha_{blind}$.

\end{bluetext}

\subsection{Simulation Results}
In order to illustrate the performance of the proposed reputation-based scheme, we consider a sensor network with a total of 100 sensors in the network, out of which 20 are Byzantine sensors. Let the sensor quantizers be given by Equation (\ref{Eqn: Quantizer - Example - detection}) and the fusion rule at the FC be the MAP rule, given as follows:
\begin{equation}
	\displaystyle \sum_{i = 1}^N \log \left( \frac{P(v_i|H_1)}{P(v_i|H_0)} \right) \quad \mathop{\stackrel{\hat{\theta} = 1}{\gtrless}}_{\hat{\theta} = 0} \quad \log \frac{p_0}{p_1}.
	\label{Eqn: Fusion Rule}
\end{equation}

Figure \ref{Fig: Reputation - Identification} plots the rate of identification of the number of Byzantine nodes in the network for the proposed reputation-based scheme for different sizes of the quantization alphabet set. Note that the convergence rate deteriorates as $M$ increases. This is due to the fact that the Byzantine nodes have increasing number of symbol options to flip to, because of which a greater number of time-samples are needed to identify the malicious behavior. In addition, we also simulate the evolution of mislabelling an honest node as a Byzantine node in time, and plot the probability of the occurrence of this event in Figure \ref{Fig: Mislabel - Prob}. Just as the convergence deteriorates with increasing $M$, we observe a similar behavior in the evolution of the probability of mislabelling honest nodes. Another important observation in Figure \ref{Fig: Mislabel - Prob} is that the probability of mislabelling a node always converges to zero in time. Similarly, we simulate the evolution of mislabelling a Byzantine node as an honest one in time in Figure \ref{Fig: Byz - Mislabel - Prob}. We observe similar convergence of the probability of mislabelling a Byzantine node as an honest node to zero, with a rate that decreases with increasing number of quantization levels, $M$. Therefore, Figures \ref{Fig: Reputation - Identification}, \ref{Fig: Mislabel - Prob} and \ref{Fig: Byz - Mislabel - Prob} demonstrate that, after a sufficient amount of time, the reputation-based scheme always identifies the true behavior of a node within the network, with negligible number of mislabels. 

\begin{figure}[!t]
	\centering
    \includegraphics[width=3.5in]{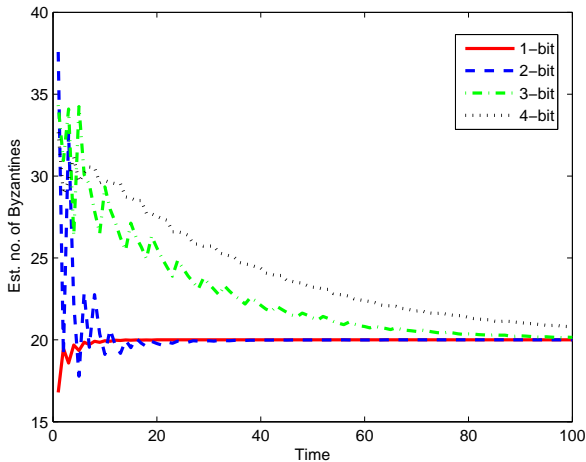}
    \caption{Rate of identification of the number of Byzantine nodes in time for different number of quantization levels}
    \label{Fig: Reputation - Identification}
\end{figure}

\begin{figure}[!t]
	\centering
    \includegraphics[width=3.5in]{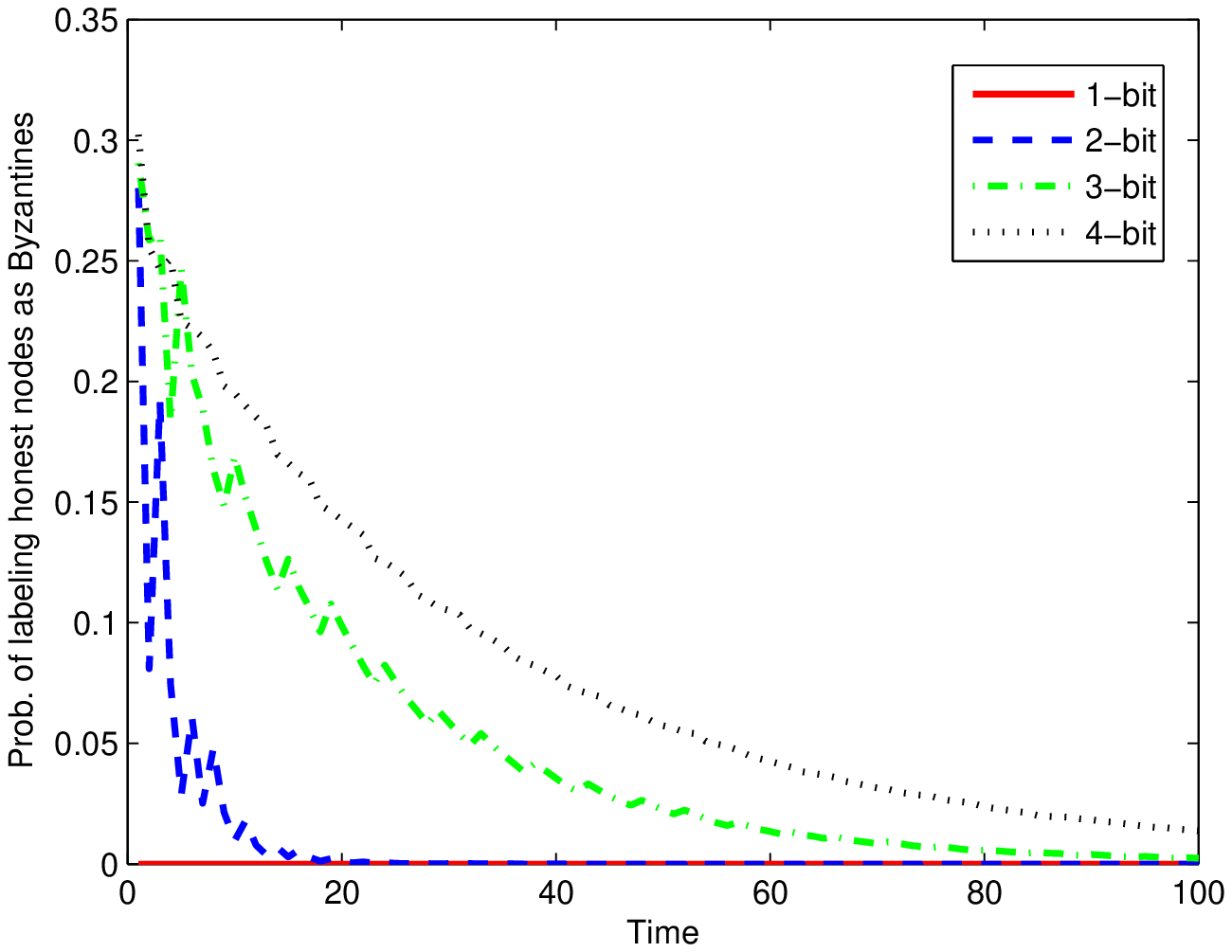}
    \caption{Evolution of the probability of mislabelling an honest node as a Byzantine in time for different number of quantization levels}
    \label{Fig: Mislabel - Prob}
\end{figure}

\begin{figure}[!t]
	\centering
    \includegraphics[width=3.5in]{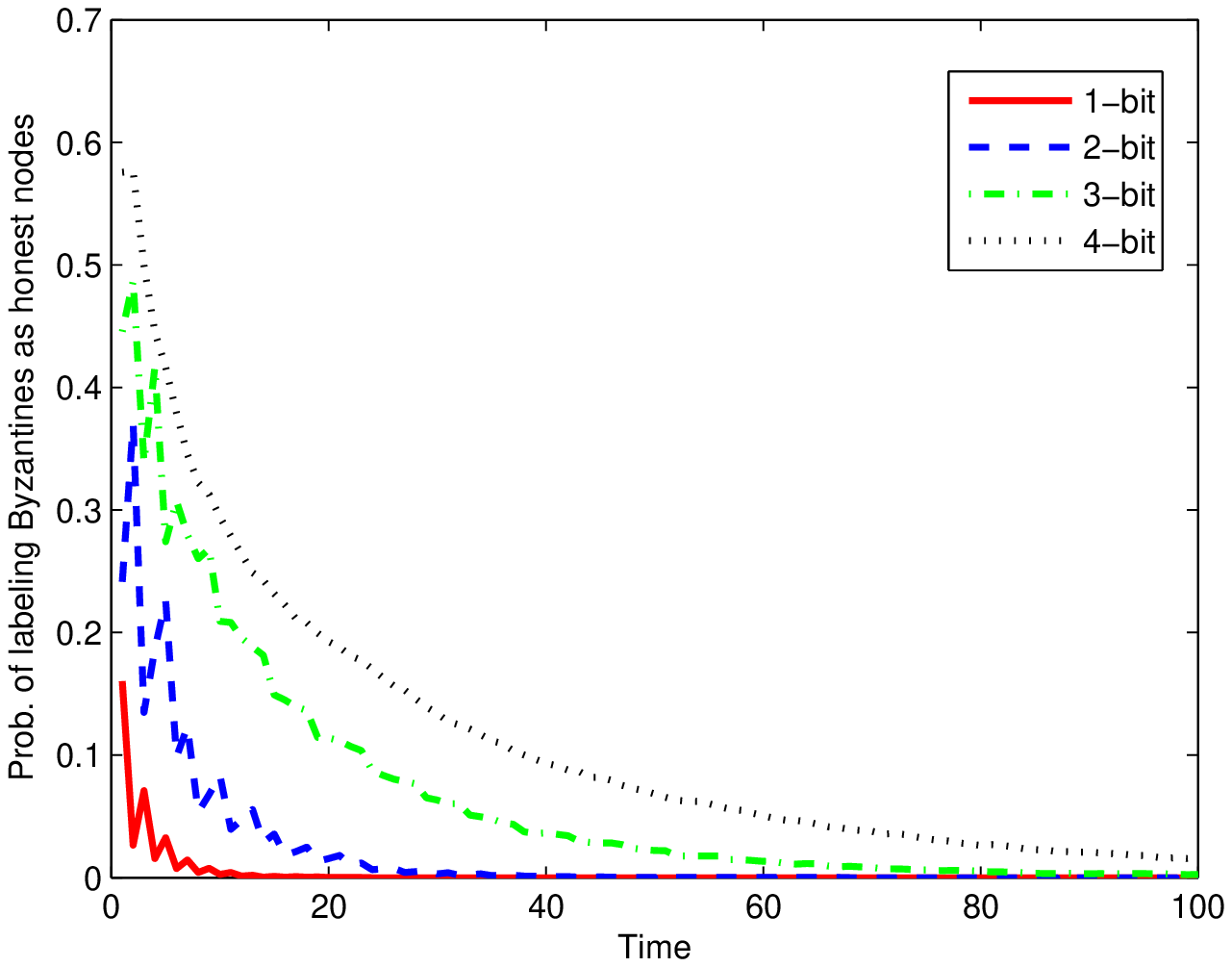}
    \caption{Evolution of the probability of mislabelling a Byzantine node as an honest node in time for different number of quantization levels}
    \label{Fig: Byz - Mislabel - Prob}
\end{figure}

\section{Concluding Remarks \label{sec: Conclusion}}
In summary, we modelled the problem of distributed inference with M-ary quantized data in the presence of Byzantine attacks, under the assumption that the attacker does not have knowledge about either the true hypotheses or the quantization thresholds at the sensors. We found the optimal Byzantine attack that \emph{blinds} the FC in the case of any inference task for both noiseless and noisy FC channels. We also considered the problem of resource-constrained Byzantine attack ($\alpha < \alpha_{blind}$) for distributed detection and estimation in the presence of resource-constrained Byzantine attacker for the special case of highly symmetric attack strategies in the presence of noiseless channels at the FC. From the inference network's perspective, we presented a mitigation scheme that identifies the Byzantine nodes through reputation-tagging. We also showed how the optimal tagging threshold can be found when the time-window $T \rightarrow \infty$. Finally, we also investigated the performance of our reputation-based scheme in our simulation results and show that our scheme always converges to finding all the compromised nodes, given sufficient amount of time. In our future work, we will investigate the optimal resource-constrained Byzantine attack in the space of all row-stochastic flipping probability matrices, and if possible, find schemes that mitigate the Byzantine attack more effectively. 

\appendices
\begin{bluetext}
\section{Optimal Byzantine Attack in the Presence of a Discrete Noisy Channel at the FC}
\label{sec:non-perfect channel}
Given that the messages $\mathbf{v} = \{ v_1, v_2, \cdots, v_N \}$ are transmitted to the fusion center (FC), we assume a discrete noise channel $\mathbb{Q} = [q_{mn}]$ between the sensors and the FC, where $q_{mn}$ is the probability with which $v_i = m$ is transformed to symbol $z_i = n$ at the $i^{th}$ sensor. Based on the received $\mathbf{z}$ at the FC, a global inference is made about the phenomenon of interest. In this paper, we assume that the row-stochastic channel matrix $\mathbb{Q}$ is invertible for the sake of tractability.

Given the transition probability matrix $\mathbb{Q}$ for the channel between the sensors and the FC, we assume that the FC receives $z_i = n$ when the the $i^{th}$ sensor transmits $v_i = m$, with a probability $q_{mn}$. The conditional distribution of $z_i = n$ under a given phenomenon $\theta$, is given as 
\begin{equation}
	P(z_i = n|\theta) = \displaystyle \sum_{m=1}^M q_{mn} P(v_i = m|\theta).
	\label{Eqn: Cond. Prob zi given theta}
\end{equation}
Note that if $\mathbb{Q}$ is a doubly stochastic matrix, since $\displaystyle \sum_{m=1}^M q_{mn} = 1$, it is sufficient for the Byzantine attacker to ensure $P(v_i = m|\theta) = \frac{1}{M}$. Thus, by Theorem~\ref{thrm: Optimal Attack}, we have the following theorem when $\mathbb{Q}$ is a doubly stochastic matrix.

\begin{thrm}
	If the channel matrix $\mathbb{Q}$ is doubly-stochastic, and if the Byzantine attacker has no knowledge about the sensors' quantization thresholds, then the optimal Byzantine attack is given as 
\begin{equation}
	\begin{array}{lcl}
		p_{lm} & = & \begin{cases} \displaystyle \frac{1}{M-1} & ; \text{ if } l \neq m \\ 0 &; \text{ otherwise} \end{cases}
		\\
		\\
		\alpha_{blind} & = & \displaystyle \frac{M-1}{M}.
		\\
	\end{array}
	\label{Eqn: Optimal Attack - Doubly Stochastic Channel}
\end{equation}
\label{Thrm: Optimal Attack - Doubly Stochastic Channel}
\end{thrm}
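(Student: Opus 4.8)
The plan is to reduce this statement to Theorem~\ref{thrm: Optimal Attack} by treating the channel as a linear map acting on the transmitted symbol distribution. First I would rewrite Equation~(\ref{Eqn: Cond. Prob zi given theta}) in vector form. Collecting the conditional masses into column vectors $\mathbf{p}_z(\theta) = [P(z_i = 1|\theta), \cdots, P(z_i = M|\theta)]^\top$ and $\mathbf{p}_v(\theta) = [P(v_i = 1|\theta), \cdots, P(v_i = M|\theta)]^\top$, the channel relation becomes $\mathbf{p}_z(\theta) = \mathbb{Q}^\top \mathbf{p}_v(\theta)$. Since blinding the FC means that the received distribution $\mathbf{p}_z(\theta)$ carries no information about $\theta$, the goal is to characterize when $\mathbf{p}_z(\theta)$ can be made independent of $\theta$ and to find the cheapest attack (smallest $\alpha$) that achieves this.

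Next I would establish the equivalence between blinding the received symbols and blinding the transmitted symbols. For any two parameter values $\theta_1, \theta_2$, one has $\mathbf{p}_z(\theta_1) - \mathbf{p}_z(\theta_2) = \mathbb{Q}^\top\left( \mathbf{p}_v(\theta_1) - \mathbf{p}_v(\theta_2) \right)$. Because $\mathbb{Q}$ (hence $\mathbb{Q}^\top$) is invertible, the left-hand side vanishes for all $\theta_1, \theta_2$ if and only if $\mathbf{p}_v(\theta)$ is itself independent of $\theta$. Thus blinding the FC through the channel is exactly as hard as blinding it without any channel: the attacker must render $\mathbf{p}_v(\theta)$ constant in $\theta$.

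With this reduction in hand, the argument of Theorem~\ref{thrm: Optimal Attack} applies verbatim. Since the attacker is ignorant of the quantization thresholds and therefore of $P(u_i = l|\theta)$, the only way to force $\mathbf{p}_v(\theta)$ to be $\theta$-independent is to impose the blinding condition of Equation~(\ref{Eqn: Blinding condition}), which pins $P(v_i = m|\theta) = (1-\alpha) + \alpha p_{mm}$ and, after minimizing $\alpha$ subject to the row-stochastic constraint, yields $p_{mm} = 0$, $p_{lm} = 1/(M-1)$ for $l \neq m$, and $\alpha_{blind} = (M-1)/M$. Finally I would verify sufficiency directly through the doubly-stochastic hypothesis: substituting $P(v_i = m|\theta) = 1/M$ into Equation~(\ref{Eqn: Cond. Prob zi given theta}) and using the column-sum property $\sum_{m=1}^M q_{mn} = 1$ gives $P(z_i = n|\theta) = 1/M$ for every $n$, so the received symbols are equiprobable and independent of $\theta$.

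The routine part is the sufficiency check, which is a one-line consequence of double stochasticity. The step that carries the real weight is the necessity/optimality direction, and the crux there is the invertibility of $\mathbb{Q}$: it is invertibility, rather than double stochasticity, that guarantees the channel cannot help the attacker blind the FC with fewer resources, so that $\alpha_{blind}$ is unchanged from the noiseless case. I would therefore be careful to state explicitly where the invertibility assumption is used and to confirm that double stochasticity is needed only to obtain the clean conclusion that the received symbols are uniform.
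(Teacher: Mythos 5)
Your proof is correct and rests on the same reduction the paper uses --- pass the problem through the channel map and invoke Theorem~\ref{thrm: Optimal Attack} --- but it is noticeably more complete than the paper's own two-sentence argument. The paper only records the sufficiency half: because the columns of a doubly stochastic $\mathbb{Q}$ sum to one, forcing $P(v_i=m|\theta)=1/M$ forces $P(z_i=n|\theta)=1/M$, ``thus by Theorem~\ref{thrm: Optimal Attack}'' the noiseless-case attack carries over. You additionally argue necessity: writing $\mathbf{p}_z(\theta)=\mathbb{Q}^T\mathbf{p}_v(\theta)$ and using invertibility (a standing assumption of the appendix), the received distribution can be $\theta$-independent only if the transmitted one is, so the channel cannot lower the attacker's cost; the paper leaves this implicit and only recovers it afterwards by specializing Theorem~\ref{Thrm: Optimal Attack - Row Stochastic Channel}. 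One caveat on your closing remark: the claim that invertibility alone freezes $\alpha_{blind}$ at $(M-1)/M$ holds under the ``$\theta$-independence'' reading of blinding (and even there one must check, as the Theorem~\ref{thrm: Optimal Attack} argument effectively does, that making $\mathbf{p}_v$ constant in $\theta$ without knowing $P(u_i=l|\theta)$ already costs $\alpha\ge (M-1)/M$); under the paper's literal definition of blinding as equiprobable \emph{received} symbols, a general invertible row-stochastic channel gives $\alpha_{blind}=1-\frac{1}{M}\max\{(\mathbb{Q}^T)^{-1}\mathbf{1}\}\le (M-1)/M$, so double stochasticity is doing more than cosmetic work --- it is what pins the required $\mathbf{p}_v$ to the uniform vector and hence the value $(M-1)/M$. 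For the theorem as stated the two readings coincide, so this does not affect the validity of your proof.
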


Therefore, we focus our attention to any general row-stochastic channel matrix $\mathbb{Q}$, where $\displaystyle \sum_{m=1}^M q_{mn}$ need not necessarily sum to unity for all $n = 1, \cdots, M$. In other words, the Byzantine attacker has to find an alternative strategy to blind the FC, where $P(z_i = n | \theta) = \frac{1}{M}$. Substituting Equation (\ref{Eqn: Cond. Prob vi given theta}) in Equation (\ref{Eqn: Cond. Prob zi given theta}) and rearranging the terms, we have the following.
\begin{equation}
	\begin{array}{l}
		P(z_i = n|\theta) = \displaystyle \sum_{m=1}^M q_{mn} P(v_i = m|\theta)
		\\
		\\
		\ = \displaystyle \sum_{m=1}^M q_{mn} [(1 - \alpha) + \alpha p_{mm}]
		\\
		\ + \displaystyle \sum_{m=1}^M q_{mn} \left\{ \sum_{l \neq m}\{\alpha p_{lm} -[(1 - \alpha) + \alpha p_{mm}]\} P(u_i = l|\theta) \right\}
		\\
		\\
		\ = \displaystyle \sum_{m=1}^M q_{mn} [(1 - \alpha) + \alpha p_{mm}]
		\\
		\ + \displaystyle \sum_{l=1}^M \left[ \sum_{m \neq l} q_{mn} \{\alpha p_{lm} -[(1 - \alpha) + \alpha p_{mm}]\} \right] P(u_i = l|\theta).
	\end{array}
	\label{Eqn: Cond. Prob zi given theta - expanded}
\end{equation}

The goal of a Byzantine attack is to blind the FC with the least amount of effort (minimum $\alpha$). To totally blind the FC is equivalent to making $P(z_i = n|\theta) = 1/M$ for all $0 \leq n \leq M-1$. In Equation (\ref{Eqn: Cond. Prob zi given theta - expanded}), the RHS consists of two terms. The first one is based on prior knowledge and the second term conveys information based on the observations. In order to blind the FC, the attacker should make the second term equal to zero. Since the attacker does not have any knowledge regarding $P(u_i = l|\theta)$, it can make the second term of Equation (\ref{Eqn: Cond. Prob zi given theta - expanded}) equal to zero by setting
\begin{equation}
	\displaystyle \sum_{m \neq l} q_{mn} \{\alpha p_{lm} -[(1 - \alpha) + \alpha p_{mm}]\} = 0 \mbox{ for all } 1 \leq n, l \leq M.
	\label{Eqn: Blinding condition - Noisy Channel}
\end{equation}
Then the conditional probability $\displaystyle P(z_i = n|\theta) = \displaystyle \sum_{m=1}^M q_{mn} [(1 - \alpha) + \alpha p_{mm}]$ becomes independent of the observations $r_i$ (or its quantized version $u_i$), resulting in equiprobable symbols at the FC. In other words, the received vector $\mathbf{z} = \{ z_1, z_2, \cdots, z_N \}$ does not carry any information about $\mathbf{u} = \{ u_1, u_2, \cdots, u_N \}$, thus making FC solely dependent on its prior information about $\theta$ in making an inference.

In order to identify the strategy that the attacker should employ to achieve the condition in Equation (\ref{Eqn: Blinding condition - Noisy Channel}), for all $n = 1, \cdots, M$, we need
\begin{equation}
	\begin{array}{lrcl}
		& P(z_i = n|\theta) & = & \displaystyle \frac{1}{M},
		\\
		\\
		\mbox{or,} & \displaystyle \sum_{m=1}^M q_{mn} \left\{  (1 - \alpha) + \alpha p_{mm}\right\} & = & \displaystyle \frac{1}{M}.
	\end{array}
	\label{Eqn: alpha-blind-condition-noisy-channel}
\end{equation}
In matrix form, we can rewrite Equation (\ref{Eqn: alpha-blind-condition-noisy-channel}) as 
\begin{equation*}
	\displaystyle (1 - \alpha) \mathbf{1}^T \mathbb{Q}  + \alpha \mathbf{p}^T \mathbb{Q} = \displaystyle \frac{1}{M} \mathbf{1}^T,
\end{equation*}
where $\mathbf{1}$ is an all-one column-vector and $\mathbf{p} = \left[ p_{11}, \cdots, p_{MM} \right]^T$ is the column-vector of all diagonal elements of $\mathbb{P}$.
In other words,
\begin{equation}
	\displaystyle \alpha (\mathbf{1} - \mathbf{p}) = \displaystyle \mathbf{1} - \frac{1}{M} \left(\mathbb{Q}^T \right)^{-1} \mathbf{1} 
	\label{Eqn: alpha-blind-condition-noisy-channel-2}
\end{equation}

Note that every element in the LHS of Equation (\ref{Eqn: alpha-blind-condition-noisy-channel-2}) always lies between 0 and 1. Therefore, the existence of the Byzantine's optimal strategy relies on the following condition.
In other words,
\begin{equation}
	\displaystyle \mathbf{0} \quad \leq \quad \left(\mathbb{Q}^T \right)^{-1} \mathbf{1} \quad \leq \quad M \ \mathbf{1}.
	\label{Eqn: alpha-blind-condition-noisy-channel-existence}
\end{equation}If \eqref{Eqn: alpha-blind-condition-noisy-channel-existence} does not hold, there does not exist an optimal strategy. 
Given that the condition in Equation (\ref{Eqn: alpha-blind-condition-noisy-channel-existence}) holds, the minimum $\alpha$ can be found as follows.
\begin{equation}
	\begin{array}{lcl}
		\displaystyle \alpha_{blind} & = & \displaystyle \min \left\{ \mathbf{1} - \frac{1}{M} \left(\mathbb{Q}^T \right)^{-1} \mathbf{1} \right\}
		\\
		\\
		& = & \displaystyle 1 - \frac{1}{M} \max \left\{ \left(\mathbb{Q}^T \right)^{-1} \mathbf{1} \right\}.
	\end{array}
	\label{Eqn: alpha-blind-condition-noisy-channel-3}
\end{equation}
Therefore, $\mathbf{p}$ can be calculated as
\begin{equation}
	 \begin{array}{lcl}
	 	\mathbf{p} & = & \displaystyle \mathbf{1} - \frac{1}{\alpha_{blind}}\left( \mathbf{1} -  \frac{1}{M} \left(\mathbb{Q}^T \right)^{-1} \mathbf{1} \right)
	 	\\
	 	\\
	 	& = & \displaystyle \frac{1}{\alpha_{blind}M}\left(\mathbb{Q}^T \right)^{-1} \mathbf{1}-\frac{1-\alpha_{blind}}{\alpha_{blind}}\mathbf{1}.
	 \end{array}
	\label{Eqn: diagonal-P-noisy-channel}
\end{equation}

Next, in order to find the rest of the $\mathbb{P}$ matrix, let us consider Equation (\ref{Eqn: Blinding condition - Noisy Channel}). Adding $q_{ln} \left\{ \alpha p_{ll} - [1 - \alpha + \alpha p_{ll}] \right\}$ on both sides to Equation (\ref{Eqn: Blinding condition - Noisy Channel}), we have
\begin{equation}
	\begin{array}{ll}
		& \displaystyle \sum_{m = 1}^M q_{mn} \{\alpha p_{lm} -[(1 - \alpha) + \alpha p_{mm}]\}  = - q_{ln} (1-\alpha) 
		\\		
		& \qquad \qquad \qquad \qquad \qquad \qquad \mbox{ for all } 1 \leq n, l \leq M.
		\\
		\\
		\mbox{or,} & \displaystyle \alpha \sum_{m = 1}^M q_{mn} p_{lm} = \displaystyle \frac{1}{M}- q_{ln} (1-\alpha)
		\\
		& \qquad \qquad \qquad \qquad \qquad \qquad \mbox{ for all } 1 \leq n, l \leq M.
	\end{array}
	\label{Eqn: Blinding condition - Noisy Channel - 2}
\end{equation}
In matrix form, we have
\begin{equation}
	\displaystyle \alpha \mathbb{P}\mathbb{Q} = \frac{1}{M}\mathds{1} -(1-\alpha)  \mathbb{Q},
	\label{Eqn: Blinding condition - Noisy Channel - 3}
\end{equation}
where $\mathds{1}$ is an all-one matrix.
Equivalently, we have
\begin{equation}
		 \mathbb{P}  =  \displaystyle \frac{1}{\alpha M}\mathds{1}\mathbb{Q}^{-1} - \frac{1-\alpha}{\alpha} \mathbb{I},
	\label{Eqn: Blinding condition - Noisy Channel - 4}
\end{equation}
where $\mathbb{I}$ is the identity matrix. Note that the vector $\mathbf{p}$ (comprising the diagonal elements of $\mathbb{P}$) obtained from Equation \eqref{Eqn: Blinding condition - Noisy Channel - 4} is verified to be same as that from Equation \eqref{Eqn: diagonal-P-noisy-channel}.

In summary, we have the following theorem that provides the optimal Byzantine strategy in the presence of noisy FC channels:
\begin{thrm}
	Let the Byzantine attacker have no knowledge about the sensors' quantization thresholds, and, the FC's channel matrix be $\mathbb{Q}$. If $\mathbb{Q}$ is non-singular, and, if $\displaystyle \ \mathbf{0} \ \leq \ \left(\mathbb{Q}^T \right)^{-1} \mathbf{1} \ \leq \ M \mathbf{1}$, then the optimal Byzantine attack is given as 
\begin{equation}
	\begin{array}{lcl}
		\alpha_{blind} & = &  \displaystyle 1 - \frac{1}{M} \max \left\{ \left(\mathbb{Q}^T \right)^{-1} \mathbf{1} \right\}
		\\
		\\
		\mathbb{P}  &=&  \displaystyle \frac{1}{\alpha_{blind} M}\mathds{1}\mathbb{Q}^{-1} - \displaystyle\frac{1-\alpha_{blind}}{\alpha_{blind}} \mathbb{I}.
	\end{array}
	\label{Eqn: Optimal Attack - Row Stochastic Channel}
\end{equation}
\label{Thrm: Optimal Attack - Row Stochastic Channel}
\end{thrm}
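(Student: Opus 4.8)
The plan is to treat blinding in the noisy-channel setting as a linear-algebra feasibility problem and then minimize the attacker's budget $\alpha$ over the feasible set. First I would fix what blinding means here: the FC is blinded precisely when $P(z_i=n\mid\theta)=1/M$ for every symbol $n$ and independently of $\theta$. Starting from $P(z_i=n\mid\theta)=\sum_{m=1}^M q_{mn}P(v_i=m\mid\theta)$ and substituting the expression for $P(v_i=m\mid\theta)$ already established in the main text, I would split the right-hand side into a purely prior term and a term carrying the sensor statistics $P(u_i=l\mid\theta)$. Because the attacker knows nothing about the quantization thresholds, the coefficient multiplying each $P(u_i=l\mid\theta)$ must vanish identically; this produces the blinding condition $\sum_{m\neq l}q_{mn}\{\alpha p_{lm}-[(1-\alpha)+\alpha p_{mm}]\}=0$ for all $l,n$, after which the surviving prior term is required to equal $1/M$.

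Next I would recast both requirements in matrix form and exploit the non-singularity of $\mathbb{Q}$. Uniformity of the prior term gives $(1-\alpha)\mathbf{1}^T\mathbb{Q}+\alpha\mathbf{p}^T\mathbb{Q}=\tfrac1M\mathbf{1}^T$, where $\mathbf{p}$ collects the diagonal entries $p_{mm}$; right-multiplying by $(\mathbb{Q}^T)^{-1}$ solves the diagonal as $\alpha(\mathbf{1}-\mathbf{p})=\mathbf{1}-\tfrac1M(\mathbb{Q}^T)^{-1}\mathbf{1}$. Stacking the blinding condition coordinate-by-coordinate yields $\alpha\,\mathbb{P}\mathbb{Q}=\tfrac1M\mathds{1}-(1-\alpha)\mathbb{Q}$, hence $\mathbb{P}=\tfrac{1}{\alpha M}\mathds{1}\mathbb{Q}^{-1}-\tfrac{1-\alpha}{\alpha}\mathbb{I}$. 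I would then verify consistency by checking that the diagonal extracted from this closed form reproduces the $\mathbf{p}$ obtained from the uniformity equation, which closes the loop between the two representations.

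The substantive part is the feasibility and minimization step. Each diagonal equation reads $\alpha(1-p_{mm})=c_m$ with $\mathbf{c}=\mathbf{1}-\tfrac1M(\mathbb{Q}^T)^{-1}\mathbf{1}$, and a legitimate attack needs $p_{mm}\in[0,1]$ and $\alpha\in(0,1]$. Requiring $1-p_{mm}\in[0,1]$ for every coordinate forces each $c_m\in[0,1]$, which is exactly the stated hypothesis $\mathbf{0}\le(\mathbb{Q}^T)^{-1}\mathbf{1}\le M\mathbf{1}$; if it fails, no blinding $\mathbb{P}$ exists. Solving $p_{mm}=1-c_m/\alpha$ and imposing $p_{mm}\ge0$ across all coordinates bounds the single scalar $\alpha$ below by each $c_m$, so the least admissible $\alpha$ is dictated by the extremal coordinate of $\mathbf{c}$; this is what defines $\alpha_{blind}=1-\tfrac1M\max\{(\mathbb{Q}^T)^{-1}\mathbf{1}\}$ in the theorem, and with $\alpha$ fixed at this value the closed form for $\mathbb{P}$ supplies the off-diagonal entries.

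I expect the main obstacle to be the feasibility bookkeeping rather than the algebra. One must confirm that a \emph{single} scalar $\alpha$ can satisfy all $M$ diagonal equations simultaneously while keeping \emph{every} entry of $\mathbb{P}$ — not merely the diagonal — a legitimate probability, and one must be careful to identify the correct extremal coordinate that pins down $\alpha_{blind}$. A reassuring check is to specialize to $\mathbb{Q}=\mathbb{I}$, where $(\mathbb{Q}^T)^{-1}\mathbf{1}=\mathbf{1}$ and the formula must collapse to $\alpha_{blind}=(M-1)/M$ with $p_{lm}=1/(M-1)$, recovering Theorem~\ref{thrm: Optimal Attack}; the doubly-stochastic case furnishes a second sanity check against Theorem~\ref{Thrm: Optimal Attack - Doubly Stochastic Channel}.
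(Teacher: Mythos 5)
Your route is the same as the paper's: you decompose $P(z_i=n\mid\theta)$ into a prior term plus a term weighted by the $P(u_i=l\mid\theta)$, annihilate the latter coefficient-by-coefficient because the attacker is ignorant of the thresholds, recast the two resulting requirements as $(1-\alpha)\mathbf{1}^T\mathbb{Q}+\alpha\mathbf{p}^T\mathbb{Q}=\frac{1}{M}\mathbf{1}^T$ and $\alpha\,\mathbb{P}\mathbb{Q}=\frac{1}{M}\mathds{1}-(1-\alpha)\mathbb{Q}$, and invert $\mathbb{Q}$. That part, including the consistency check between the two expressions for the diagonal and the sanity checks at $\mathbb{Q}=\mathbb{I}$ and doubly stochastic $\mathbb{Q}$, reproduces the appendix essentially verbatim.

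The gap is in the step that pins down $\alpha_{blind}$. Writing $c_m=1-\frac{1}{M}[(\mathbb{Q}^T)^{-1}\mathbf{1}]_m$, your (correct) requirement $p_{mm}=1-c_m/\alpha\ge 0$ for \emph{every} $m$ forces $\alpha\ge c_m$ for every $m$, hence $\alpha\ge\max_m c_m=1-\frac{1}{M}\min_m[(\mathbb{Q}^T)^{-1}\mathbf{1}]_m$. You then assert that this extremal coordinate ``is what defines'' the theorem's $\alpha_{blind}=1-\frac{1}{M}\max\{(\mathbb{Q}^T)^{-1}\mathbf{1}\}$, but that quantity equals $\min_m c_m$, not $\max_m c_m$; the two coincide only when all entries of $(\mathbb{Q}^T)^{-1}\mathbf{1}$ are equal (e.g., the doubly stochastic case). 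The paper's own derivation simply sets $\alpha_{blind}=\min\{\mathbf{1}-\frac{1}{M}(\mathbb{Q}^T)^{-1}\mathbf{1}\}$ without your coordinatewise feasibility argument, so your more careful bookkeeping does not deliver the stated formula --- it contradicts it (at $\alpha=\min_m c_m$ every coordinate with $c_m>\alpha$ would give $p_{mm}<0$). You need to resolve this min/max tension explicitly rather than assert it away. Relatedly, neither you nor the paper verifies that the off-diagonal entries of $\mathbb{P}=\frac{1}{\alpha M}\mathds{1}\mathbb{Q}^{-1}-\frac{1-\alpha}{\alpha}\mathbb{I}$ lie in $[0,1]$; you correctly flag this as the remaining obstacle but leave it open.
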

Note that, if the channel matrix $\mathbb{Q}$ is doubly-stochastic, we have $\mathbb{Q} \mathbf{1} = \mathbf{1}$ and $\mathbb{Q}^T \mathbf{1} = \mathbf{1}$. Substituting these conditions in Equation (\ref{Eqn: Optimal Attack - Row Stochastic Channel}), Theorem \ref{Thrm: Optimal Attack - Row Stochastic Channel} reduces to Theorem \ref{Thrm: Optimal Attack - Doubly Stochastic Channel}.

Having identified the optimal Byzantine attack, one can observe that the attacker needs to compromise a huge number of sensors ($\alpha_{blind} = \displaystyle 1 - \frac{1}{M} \max \left\{ \left(\mathbb{Q}^T \right)^{-1} \mathbf{1} \right\}$) in the network to blind the FC. Therefore, it is obvious that, in the case of a resource-constrained attacker, the attacker compromises a fixed fraction of nodes $\alpha \leq \alpha_{blind}$ in such a way that the performance degradation at the FC is maximized. In our future work, we will investigate the problem of finding the optimal strategy in the context of resource-constrained Byzantine attacks in the presence of noisy FC channels.


%
%
%

\end{bluetext}

\section{Proof for Theorem \ref{Thrm: Optimal Byzantine Attack - constrained - symmetric} \label{App: Theorem - Optimal Byzantine Attack - constrained - symmetric}}

%
For the sake of notational simplicity, let us denote $x_m = P(u = m|H_0)$ and $y_m = P(u = m|H_1)$. Similarly, $\tilde{x}_m = P(v = m|H_0)$ and $\tilde{y}_m = P(v = m|H_1)$.

Rewriting Equation~\eqref{Eqn: Cond. Prob vi given theta} in our new notation, we have
\begin{equation}
	\tilde{x}_m = \alpha\sum_{l\neq m}p x_l + (1-\alpha(M-1)p) x_m = \alpha p + (1-M\alpha p)x_m
	\label{head_x_m}
\end{equation}
and
\begin{equation}
	\tilde{y}_m = \alpha\sum_{l\neq m}p y_l + (1-\alpha(M-1)p) y_m = \alpha p +(1-M\alpha p)y_m.
	\label{head_y_m}
\end{equation}

Therefore, the KLD at the FC can be rewritten as
\begin{equation}
	\displaystyle D_{FC} = \displaystyle \sum_{m = 1}^M \tilde{x}_m \log \left( \frac{\tilde{x}_m}{\tilde{y}_m} \right).
	\label{Eqn: D_FC - x - y}
\end{equation}
 

On partially differentiating $D_{FC}$ with respect to $p$, we have

\begin{equation}
	\begin{array}{l}
		\displaystyle \frac{\partial D_{FC}}{\partial p} \ = \ \displaystyle \frac{\partial}{\partial p} \sum_{m = 1}^M \tilde{x}_m \log \left( \frac{\tilde{x}_m}{\tilde{y}_m} \right)
		\\
		\\
		\quad = \displaystyle \alpha \sum_{m = 1}^M \left[ (1 - M x_m) \left( 1 + \log \frac{\tilde{x}_m}{\tilde{y}_m} \right) - (1 - M y_m) \frac{\tilde{x}_m}{\tilde{y}_m} \right]
		\\
		\\
		\quad = \displaystyle \alpha \sum_{m = 1}^M (1 - M x_m) + \alpha \sum_{m = 1}^M (1 - M x_m) \log \frac{\tilde{x}_m}{\tilde{y}_m} 
		\\ \\ \qquad \qquad \qquad - \displaystyle \alpha \sum_{m = 1}^M (1 - M y_m) \frac{\tilde{x}_m}{\tilde{y}_m}.
	\end{array}
	\label{Eqn: Diff - D_FC - p - ineq}
\end{equation}

Consider the first term in the RHS of Equation (\ref{Eqn: Diff - D_FC - p - ineq}). Note that, since $\mathbf{x} = \{ x_1, \cdots, x_M \}$ is a probability mass function, we have 
$$\displaystyle \sum_{m = 1}^M (1 - M x_m) = M - M \sum_{m = 1}^M x_m = M - M = 0.$$
Therefore, Equation (\ref{Eqn: Diff - D_FC - p - ineq}) reduces to
\begin{equation}
	\displaystyle \frac{\partial D_{FC}}{\partial p} = \displaystyle \alpha \sum_{m = 1}^M (1 - M x_m) \log \frac{\tilde{x}_m}{\tilde{y}_m} - \alpha \sum_{m = 1}^M (1 - M y_m) \frac{\tilde{x}_m}{\tilde{y}_m}.
	\label{Eqn: Diff - D_FC - p - ineq - 2}
\end{equation}

Rearranging the terms in Equation (\ref{Eqn: Diff - D_FC - p - ineq - 2}), we have
\begin{equation}
	\begin{array}{l}
		\displaystyle \frac{\partial D_{FC}}{\partial p} \ = \ \displaystyle \alpha \sum_{m = 1}^M \left[ \log \frac{\tilde{x}_m}{\tilde{y}_m} - \frac{\tilde{x}_m}{\tilde{y}_m} \right] - \alpha M \sum_{m = 1}^M x_m \log \frac{\tilde{x}_m}{\tilde{y}_m} 
		\\ \\ \qquad \qquad \qquad \qquad \displaystyle + \alpha M \sum_{m = 1}^M y_m \frac{\tilde{x}_m}{\tilde{y}_m}.
	\end{array}
	\label{Eqn: Diff - D_FC - p - ineq - 3}
\end{equation}

Let us denote the first term as $T_1$. In other words, 

$$T_1 = \alpha \sum_{m = 1}^M \left[ \log \frac{\tilde{x}_m}{\tilde{y}_m} - \frac{\tilde{x}_m}{\tilde{y}_m} \right].$$

Let us now focus our attention on the other terms in the RHS of Equation (\ref{Eqn: Diff - D_FC - p - ineq - 3}). Substituting Equations \eqref{head_x_m} and \eqref{head_y_m} in the second and third terms of the RHS of Equation (\ref{Eqn: Diff - D_FC - p - ineq - 3}), we have 
\begin{equation}
	\begin{array}{l}
		\displaystyle \frac{\partial D_{FC}}{\partial p} \ = \ \displaystyle T_1 - \frac{M \alpha}{1 - M \alpha p} \sum_{m = 1}^M (\tilde{x}_m - \alpha p) \log \frac{\tilde{x}_m}{\tilde{y}_m} 
		\\ \\ \qquad \qquad \qquad \qquad \displaystyle + \frac{M \alpha}{1 - M \alpha p} \sum_{m = 1}^M (\tilde{y}_m - \alpha p) \frac{\tilde{x}_m}{\tilde{y}_m}
		\\
		\\
		\quad = \ \displaystyle T_1 - \frac{M \alpha}{1 - M \alpha p} D(\mathbf{\tilde{x}} || \mathbf{\tilde{y}}) 
		\\ \\ \qquad \displaystyle + \frac{M \alpha}{1 - M \alpha p} \left\{ \sum_{m = 1}^M \alpha p \log \frac{\tilde{x}_m}{\tilde{y}_m} - \sum_{m = 1}^M \alpha p \frac{\tilde{x}_m}{\tilde{y}_m} + \sum_{m = 1}^M \tilde{x}_m \right\},
	\end{array}
	\label{Eqn: Diff - D_FC - p - ineq - 4}
\end{equation}
where $D(\mathbf{\tilde{x}}||\mathbf{\tilde{y}})$ is the KLD between $\mathbf{\tilde{x}}$ and $\mathbf{\tilde{y}}$ and is, therefore, non-negative. Also, note that in Equation (\ref{Eqn: Diff - D_FC - p - ineq - 4}), since $\mathbf{\tilde{x}} = \{ \tilde{x}_1, \cdots, \tilde{x}_M \}$ is a probability mass function, $\displaystyle \sum_{m = 1}^M \hat{x}_m = 1$.

Therefore, Equation (\ref{Eqn: Diff - D_FC - p - ineq - 4}) reduces to
\begin{equation}
	\begin{array}{l}
		\displaystyle \frac{\partial D_{FC}}{\partial p} \ = \ \displaystyle T_1 - \frac{M \alpha}{1 - M \alpha p} D(\mathbf{\tilde{x}} || \mathbf{\tilde{y}}) + \frac{M \alpha}{1 - M \alpha p} 
		\\ \\ \qquad \qquad \qquad \displaystyle + \frac{M \alpha^2 p}{1 - M \alpha p}  \sum_{m = 1}^M \left[ \log \frac{\tilde{x}_m}{\tilde{y}_m} - \frac{\tilde{x}_m}{\tilde{y}_m} \right].
	\end{array}
	\label{Eqn: Diff - D_FC - p - ineq - 5}
\end{equation}
Note that the last term in the RHS of Equation (\ref{Eqn: Diff - D_FC - p - ineq - 5}),
$$\displaystyle \frac{M \alpha^2 p}{1 - M \alpha p}  \sum_{m = 1}^M \left[ \log \frac{\tilde{x}_m}{\tilde{y}_m} - \frac{\tilde{x}_m}{\tilde{y}_m} \right] = \frac{M \alpha p}{1 - M \alpha p} T_1.$$
In other words,
\begin{equation}
	\begin{array}{l}
		\displaystyle \frac{\partial D_{FC}}{\partial p} = \displaystyle \left( 1 + \frac{M \alpha p}{1 - M \alpha p}\right)T_1 - \frac{M \alpha}{1 - M \alpha p} D(\mathbf{\tilde{x}} || \mathbf{\tilde{y}}) 
		\\ \\ \qquad \qquad \qquad \qquad \qquad \displaystyle + \frac{M \alpha}{1 - M \alpha p}
		\\
		\\
		\quad = \displaystyle \frac{1}{1 - M \alpha p} T_1 - \frac{M \alpha}{1 - M \alpha p} D(\mathbf{\tilde{x}} || \mathbf{\tilde{y}}) + \frac{M \alpha}{1 - M \alpha p}.
	\end{array}
	\label{Eqn: Diff - D_FC - p - ineq - 6}
\end{equation}

Rearranging the terms in Equation (\ref{Eqn: Diff - D_FC - p - ineq - 6}) and expanding $T_1$, we have
\begin{equation}
	\begin{array}{l}
		\displaystyle \frac{\partial D_{FC}}{\partial p} \ = \ \displaystyle - \frac{M \alpha}{1 - M \alpha p} D(\mathbf{\tilde{x}} || \mathbf{\tilde{y}}) + \frac{M \alpha}{1 - M \alpha p} 
		\\ \\ \qquad \qquad \qquad \displaystyle + \frac{\alpha}{1 - M \alpha p} \sum_{m = 1}^M \left[ \log \frac{\tilde{x}_m}{\tilde{y}_m} - \frac{\tilde{x}_m}{\tilde{y}_m} \right]
		\\
		\\
		\quad = \ \displaystyle - \frac{M \alpha}{1 - M \alpha p} D(\mathbf{\tilde{x}} || \mathbf{\tilde{y}}) 
		\\ \\ \qquad \qquad \qquad \displaystyle + \frac{\alpha}{1 - M \alpha p} \sum_{m = 1}^M \left[ \log \frac{\tilde{x}_m}{\tilde{y}_m} - \left( \frac{\tilde{x}_m}{\tilde{y}_m} - 1\right) \right].
	\end{array}
	\label{Eqn: Diff - D_FC - p - ineq - 7}
\end{equation}

Since $\log x \leq x - 1$ for all $x$, we find that the second term in the RHS of Equation (\ref{Eqn: Diff - D_FC - p - ineq - 2}) is negative. 
Therefore, we have
\begin{equation}
	\displaystyle \frac{\partial D_{FC}}{\partial p} \leq 0.
\end{equation}
Since $D_{FC}$ is a non-increasing function of $p$, the optimal $p$, $p^*$, takes the maximum value $1/(M-1)$.

\section{Proof for Theorem \ref{Thrm: Optimal Byzantine Attack - estimation - constrained - symmetric} \label{App: Theorem - Optimal Byzantine Attack - estimation - constrained - symmetric}}
%
For the sake of notational simplicity, we let $z_m = P(u = m|\theta)$. Similarly, $\tilde{z}_m = P(v = m|\theta)$. Using this notation in Equation (\ref{Eqn: J_FC}), we have
\begin{equation}
	\begin{array}{lcl}
		\displaystyle J_{FC} & = & \displaystyle  \sum_{m = 1}^M P(v = m | \theta) \left( \frac{\partial \log P(v = m | \theta)}{\partial \theta} \right)^2
		\\
		\\
		& = & \displaystyle \sum_{m = 1}^M \tilde{z}_m \left( \frac{\partial \log \tilde{z}_m}{\partial \theta} \right)^2
		\\
		\\
		& = & \displaystyle (1 - M \alpha p)^2 \sum_{m = 1}^M \frac{1}{\tilde{z}_m} \left( \frac{\partial z_m}{\partial \theta} \right)^2.
	\end{array}
	\label{Eqn: Diff - J_FC - p}
\end{equation}

Partially differentiating $J_{FC}$ with respect to $p$, we have 
\begin{equation}
	\begin{array}{l}
		\displaystyle \frac{\partial J_{FC}}{\partial p} \ = \ \displaystyle 2 (1 - M \alpha p) (-M \alpha) \sum_{m = 1}^M \frac{1}{\tilde{z}_m} \left( \frac{\partial z_m}{\partial \theta} \right)^2 
		\\ \\ \qquad \qquad \displaystyle + (1 - M \alpha p)^2 \sum_{m = 1}^M \left( - \frac{1}{\tilde{z}_m^2} \right) (\alpha - M \alpha z_m) \left( \frac{\partial z_m}{\partial \theta} \right)^2
		\\
		\\
		\quad = \displaystyle -(1 - M \alpha p) \left[ 2 M \alpha \sum_{m = 1}^M \tilde{z}_m \left( \frac{1}{\tilde{z}_m} \frac{\partial z_m}{\partial \theta} \right)^2 \right.
		\\ \\ \qquad \qquad \qquad \qquad \displaystyle \ + (1 - M \alpha p) \sum_{m = 1}^M \alpha\left( \frac{1}{\tilde{z}_m} \frac{\partial z_m}{\partial \theta} \right)^2 
		\\ \\ \qquad \qquad \qquad \qquad \quad \left. \displaystyle - (1 - M \alpha p) \sum_{m = 1}^M M \alpha z_m \left( \frac{1}{\tilde{z}_m} \frac{\partial z_m}{\partial \theta} \right)^2 \right]
		\\
		\\
		\quad = \displaystyle  -(1 - M \alpha p) \left[ \alpha (1 - M \alpha p) \sum_{m = 1}^M \left( \frac{1}{\tilde{z}_m} \frac{\partial z_m}{\partial \theta} \right)^2 \right.
		\\ \\ \qquad \qquad \qquad \qquad \quad \left. \displaystyle + M \alpha (1 + M \alpha p) \sum_{m = 1}^M z_m \left( \frac{1}{\tilde{z}_m} \frac{\partial z_m}{\partial \theta} \right)^2 \right].
	\end{array}
	\label{Eqn: Diff - J_FC - p - ineq}
\end{equation}
In Equation (\ref{Eqn: Diff - J_FC - p - ineq}), we have a negative term multiplied by a non-negative term, and hence we have
\begin{equation}
	\displaystyle \frac{\partial J_{FC}}{\partial p} \leq 0.
\end{equation}
Since $J_{FC}$ is a non-increasing function of $p$, $p^* = \displaystyle \frac{1}{M-1}$, being the maximum value, is the optimal solution to Problem \ref{Prob. Form: constrained attack - estimation - symmetric}.

\section*{Acknowledgement}
This work was supported in part by AFOSR under Grants FA9550-10-1-0458, FA9550-10-1-0263, FA
9550-10-C-0179 and by CASE at Syracuse University and National Science Council of Taiwan, under grants NSC 99-2221-E-011-158 -MY3, NSC 101-2221-E-011-069 -MY3. Han's work was completed during his visit to Syracuse University from 2012 to 2013.


\bibliographystyle{IEEEtran}    
\bibliography{IEEEabrv,references}


\end{document}